\author{Daniel Kane \and Sushrut Karmalkar \and Eric Price}
\title{Robust polynomial regression up to the information theoretic limit}
\newtheorem{theorem}{Theorem}[section]
\newtheorem{lemma}[theorem]{Lemma}
\newtheorem{definition}[theorem]{Definition}
\newtheorem{corollary}[theorem]{Corollary}
\newcommand{\abs}[1]{|#1|}
\newcommand{\tabs}[1]{\left|#1\right|}
\newcommand{\norm}[1]{\|#1\|}
\newcommand{\wh}{\widehat}
\newcommand{\wt}{\widetilde}
\newcommand{\eps}{\epsilon}
\newcommand{\R}{\mathbb{R}}
\newcommand{\RN}[1]{%
  \textup{\uppercase\expandafter{\romannumeral#1}}%
}
\DeclareMathOperator*{\E}{\mathbb{E}}
\DeclareMathOperator*{\median}{median}
\DeclareMathOperator*{\mean}{mean}
\DeclareMathOperator*{\argmin}{arg\,min}
\DeclareMathOperator{\poly}{poly}
\DeclareMathOperator{\obj}{obj}
\newcommand{\define}[4][ignore]{%
  \ifstrequal{#1}{ignore}{}{
  \@namedef{thmtitle@#2}{#1}}%
  \@namedef{thm@#2}{#4}%
  \@namedef{thmtypen@#2}{lemma}%
  \newtheorem{thmtype@#2}[theorem]{#3}%
  \newtheorem*{thmtypealt@#2}{#3~\ref{#2}}%
}
\newcommand{\state}[1]{%
  \@namedef{curthm}{#1}
  \@ifundefined{thmtitle@#1}{
  \begin{thmtype@#1}
    }{
  \begin{thmtype@#1}[\@nameuse{thmtitle@#1}]
  }
    \label{#1}
    \@nameuse{thm@#1}
  \end{thmtype@#1}
  \@ifundefined{thmdone@#1}{
  \@namedef{thmdone@#1}{stated}%
  }{}
}
\newcommand{\restate}[1]{%
  \@namedef{curthm}{#1}
  \@ifundefined{thmtitle@#1}{
    \begin{thmtypealt@#1}
    }{
  \begin{thmtypealt@#1}[\@nameuse{thmtitle@#1}]
  }
    \@nameuse{thm@#1}
  \end{thmtypealt@#1}
  \@ifundefined{thmdone@#1}{
  \@namedef{thmdone@#1}{stated}%
  }{}
}
\newcommand{\thmlabel}[1]{
  \@ifundefined{thmdone@\@nameuse{curthm}}{\label{#1}
    }{\tag*{\eqref{#1}}}
}
\begin{document}
\maketitle
\begin{abstract}
  We consider the problem of \emph{robust polynomial regression},
  where one receives samples $(x_i, y_i)$ that are usually within
  $\sigma$ of a polynomial $y = p(x)$, but have a $\rho$ chance of
  being arbitrary adversarial outliers.  Previously, it was known how
  to efficiently estimate $p$ only when $\rho < \frac{1}{\log d}$.  We
  give an algorithm that works for the entire feasible range of
  $\rho < 1/2$, while simultaneously improving other parameters of the
  problem.  We complement our algorithm, which gives a factor 2
  approximation, with impossibility results that show, for example,
  that a $1.09$ approximation is impossible even with infinitely many
  samples.
\end{abstract}

\section{Introduction}

Polynomial regression is the problem of finding a polynomial that
passes near a collection of input points.  The problem has been
studied for 200 years with diverse
applications~\cite{gergonne1974application}, including computer
graphics~\cite{PrattAlgSurfaces1987}, machine
learning~\cite{kalai2008agnostically}, and
statistics~\cite{macneill1978properties}.  As with linear regression,
the classic solution to polynomial regression is least squares, but
this is not robust to outliers: a single outlier point can perturb the
least squares solution by an arbitrary amount.  Hence finding a
``robust'' method of polynomial regression is an important question.

A version of this problem was formalized by Arora and Khot~\cite{AK03}.  We want to
learn a degree $d$ polynomial $p: [-1, 1] \to \R$, and we observe
$(x_i, y_i)$ where $x_i$ is drawn from some measure $\chi$ (say,
uniform) and $y_i = p(x_i) + w_i$ for some noise $w_i$.  Each sample
is an ``outlier'' independently with probability at most $\rho$; if
the sample is not an outlier, then $\abs{w_i} \leq \sigma$ for some
$\sigma$.  Other than the random choice of outlier locations, the
noise is adversarial.  One would like to recover $\wh{p}$ with
$\norm{p - \wh{p}}_\infty \leq C\sigma$ using as few samples as
possible, with high probability.

In other contexts an $\ell_\infty$ requirement on the input noise
would be a significant restriction, but outlier tolerance makes it
much less so.  For example, independent bounded-variance noise fits in
this framework: by Chebyshev's inequality, the framework applies with outlier chance
$\rho = \frac{1}{100}$ and $\sigma = 10\E[w_i^2]^{1/2}$, which gives
the ideal result up to constant factors.  At the same time, the
$\ell_\infty$ requirement on the input allows for the strong
$\ell_\infty$ bound on the output.

Arora and Khot~\cite{AK03} showed that $\rho < 1/2$ is
information-theoretically necessary for non-trivial recovery, and
showed how to solve the problem using a linear program when
$\rho = 0$.  For $\rho > 0$, the RANSAC~\cite{RANSAC} technique of
fitting to a subsample of the data works in polynomial time when
$\rho \lesssim \frac{\log d}{d}$.  Finding an efficient algorithm for
$\rho \gg \frac{\log d}{d}$ remained open until recently.

Last year, Guruswami and Zuckerman~\cite{GZ16} showed how to solve the
problem in polynomial time for $\rho$ larger than $\frac{\log d}{d}$. Unfortunately, their result falls short of the ideal in several ways: it needs a low
outlier rate $(\rho < \frac{1}{\log d})$, a bounded signal-to-noise ratio
$(\frac{\norm{p}_\infty}{\sigma} < d^{O(1)})$, and has a
super-constant approximation factor
$(\norm{p - \wh{p}}_\infty \lesssim \sigma \left(1 +
  \frac{\norm{p}_\infty}{\sigma}\right)^{0.01})$.  It uses
$O(d^2 \log^c d)$ samples from the uniform measure, or $O(d \log^c d)$
samples from the Chebyshev measure $\frac{1}{\sqrt{1-x^2}}$.

These deficiencies mean that the algorithm doesn't work when all the
noise comes from outliers ($\sigma = 0$); the low outlier rate
required also leads to a superconstant factor loss when reducing from
other noise models.

In this work, we give a simple algorithm that avoids all these
deficiencies: it works for all $\rho < 1/2$; it has no restrictions on
$\sigma$; and it gets a constant approximation factor $C = 2 + \eps$.
Additionally, it only uses $O(d^2)$ samples from the uniform measure,
without any log factors, and $O(d \log d)$ samples from the Chebyshev
measure. We also give lower bounds for the approximation factor $C$, indicating that one cannot achieve a $1 +\eps$ approximation by showing that  $C > 1.09$. Our lower bounds are not the best possible, and it is possible that the true lower bounds are much better.
\subsection{Algorithmic results}

The problem formulation has randomly placed outliers, which is needed
to ensure that we can estimate the polynomial locally around any point
$x$.  We use the following definition to encapsulate this requirement,
after which the noise can be fully adversarial:

\begin{definition}
  The size $m$ \emph{Chebyshev partition} of $[-1, 1]$ is the set of
  intervals $I_j = [\cos \frac{\pi j}{m}, \cos \frac{\pi (j-1)}{m}]$
  for $j \in [m]$.

  We say that a set $S$ of samples $(x_i, y_i)$ is ``$\alpha$-good''
  for the partition if, in every interval $I_j$, strictly less than an
  $\alpha$ fraction of the points $x_i \in I_j$ are outliers. 
\end{definition}

The size $m$ Chebyshev partition is the set of intervals between
consecutive extrema of the Chebyshev polynomial of degree $m$.
Standard approximation theory recommends sampling functions at the
roots of the Chebyshev polynomial, so intuitively a good set of
samples will allow fairly accurate estimation near each of these
``ideal'' sampling points. All our algorithms will work for any set of $\alpha$-
good samples, for $\alpha < \frac{1}{2}$ and sufficiently large $m$. The sample complexities are
then what is required for $S$ to be good with high probability.

We first describe two simple algorithms that do not quite achieve the goal.  We then describe how to black-box combine their results to
get the full result.

\paragraph{L1 regression.} Our first result is that $\ell_1$ regression \emph{almost}
solves the problem: it satisfies all the requirements, except that the
resulting error $\norm{\wh{p} - p}$ is bounded in $\ell_1$ not
$\ell_\infty$:
\define{lem:l1reg}{Lemma}{%
  Suppose the set $S$ of samples is $\alpha$-good for the size $m = O(d)$
  Chebyshev partition, for constant $\alpha < \frac{1}{2}$.  Then the result $\wh{p}$ of $\ell_1$
  regression
  \[
  \argmin_{\substack{\text{degree-}d\\\text{polynomial }\wh{p}}} \sum_{i=1}^n \abs{I_j} \mean_{x_i \in I_j} \abs{y_i - \wh{p}(x_i)}
  \]
  satisfies $\norm{\wh{p} - p}_1 \leq O_{\alpha}(1) \cdot \sigma$.
}
\state{lem:l1reg}
This has a nice parallel to the $d=1$ case, where $\ell_1$ regression
is the canonical robust estimator.

As shown by the Chebyshev polynomial in Figure~\ref{fig:chebyshev}, $\ell_1$ regression might not give a good solution to the $\ell_{\infty}$ problem. However, converting to an $\ell_\infty$ bound loses at most an $O(d^2)$
factor. This means this lemma is already useful for an $\ell_\infty$ bound: one
of the requirements of~\cite{GZ16} was that $\norm{p}_\infty \leq
\sigma d^{O(1)}$.  We can avoid this by first computing the $\ell_1$
regression estimate $\wh{p}^{(\ell_1)}$ and then applying~\cite{GZ16}
to the residual $p - \wh{p}^{(\ell_1)}$, which always satisfies the
condition.

\begin{figure}
  \centering
  \includegraphics[width=0.48\textwidth]{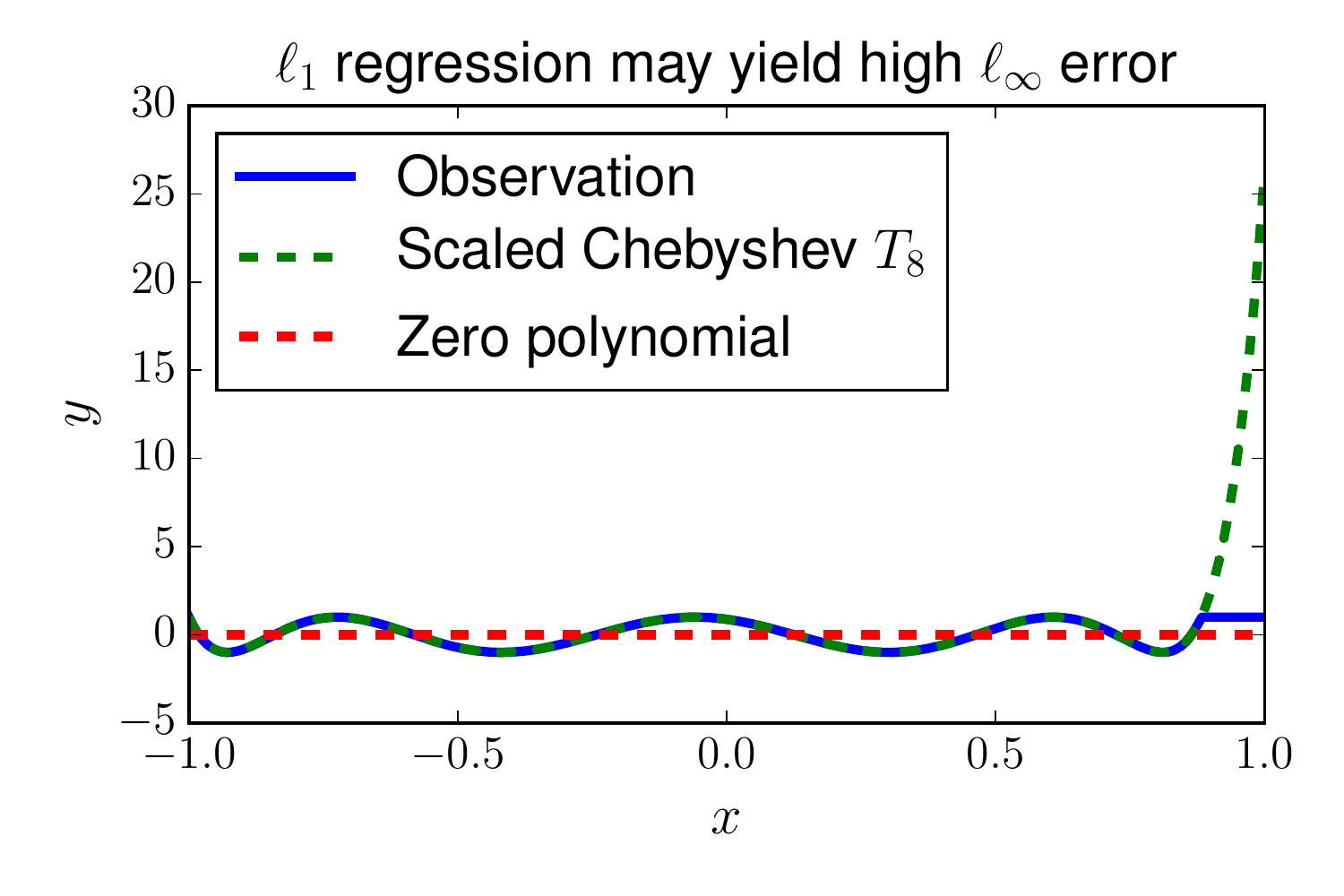}
  \captionsetup{aboveskip=-5pt}
  \caption{The blue observations are within $\sigma = 1$ of the red
    zero polynomial at each point, but are closer in $\ell_1$ to the
    green Chebyshev polynomial.  This demonstrates that $\ell_1$
    regression does not solve the $\ell_\infty$ problem even for $\rho
    = 0$.}
  \label{fig:chebyshev}
\end{figure}

\paragraph{Median-based recovery.} Our second result is for a
median-based approach to the problem: take the median $\wt{y}_j$ of
the $y$ values within each interval $I_j$.  Because the median is
robust, this will lie within the range of inlier $y$ values for that
interval.  We don't know which $x \in I_j$ it corresponds to, but this
turns out not to matter too much: assigning each $\wt{y}_j$ to an
arbitrary $\wt{x}_j \in I_j$ and applying non-robust $\ell_\infty$
regression gives a useful result.

\define{lem:medianrecov}{Lemma}{%
  Let $\eps, \alpha < \frac{1}{2}$, and suppose the set $S$ of samples
  is $\alpha$-good for the size $m = O(d/\eps)$ Chebyshev partition.
  Let $\wt{x}_j \in I_j$ be chosen arbitrarily, and let
  $\wt{y}_j = \median_{x_i \in I_j} y_i$.  Then the degree $d$
  polynomial $\wh{p}$ minimizing
  \[
    \max_{j \in [m]} \abs{\wh{p}(\wt{x}_j) - \wt{y}_j}
  \]
  satisfies $\norm{\wh{p} - p}_\infty \leq (2+\eps)\sigma + \eps \norm{p}_\infty$.
}
\state{lem:medianrecov}

Without the additive $\eps \norm{p}_\infty$ term, which comes from not
knowing the $x \in I_j$ for $\wt{y}_j$, this would be exactly what we want.
If $\sigma \ll \norm{p}_\infty$, however, this is not so good---but it
still makes progress.

\paragraph{Iterating the algorithms.} While neither of the above
results is sufficient on its own, simply applying them iteratively on
the residual left by previous rounds gives our full theorem.  The
result of $\ell_1$ regression is a $\wh{p}^{(0)}$ with
$\norm{p - \wh{p}^{(0)}}_\infty = O(d^2 \sigma)$.  Applying the
median recovery algorithm to the residual points
$(x_i, y_i - \wh{p}^{(0)}(x_i))$, we will get $\wh{p}'$ so that
$\wh{p}^{(1)} = \wh{p}^{(0)} + \wh{p}'$ has
\[
  \norm{\wh{p}^{(1)} - p}_\infty \leq (2 + \eps) \sigma + \eps \cdot O(d^2 \sigma)
\]
If we continue applying the median recovery algorithm to the remaining
residual, the $\ell_\infty$ norm of the residual will continue to
decrease exponentially.  After $r=O(\log_{1/\eps} d)$ rounds we will
reach
\[
  \norm{\wh{p}^{(r)} - p}_\infty \leq (2+4\eps) \sigma
\]
as desired\footnote{We remark that one could skip the initial $\ell_1$
  minimization step, but the number of rounds would then depend on
  $\log (\frac{\norm{p}_\infty}{\sigma})$, which could be
  unbounded. Note that this only affects running time and not the
  sample complexity.}.  Since each method can be computed efficiently
with a linear program, this gives our main theorem:

\define{thm:main}{Theorem}{%
  Let $\eps, \alpha < \frac{1}{2}$, and suppose the set $S$ of samples
  is $\alpha$-good for the size $m = O(d/\eps)$ Chebyshev partition.
  The result $\wh{p}$ of Algorithm~\ref{alg:main} is a degree $d$
  polynomial satisfying
  $\norm{\wh{p} - p}_\infty \leq (2+\eps)\sigma$.  Its running time is
  that of solving $O(\log_{1/\eps} d)$ linear programs, each with
  $O(d)$ variables and $O(\abs{S})$ constraints.  } \state{thm:main}

We now apply this to the robust polynomial regression problem, where we
receive points $(x_i, y_i)$ such that each $x_i$ is drawn from some
distribution, and with probability $\rho$ it is an outlier.  An
adversary then picks the $y_i$ such that, for each non-outlier $i$,
$\abs{y_i - p(x_i)} \leq \sigma$.  We observe the following:
\begin{corollary}\label{cor:randomerrors}
  Consider the robust polynomial regression problem for constant
  outlier chance $\rho < 1/2$, with points $x_i$ drawn from the
  Chebyshev distribution $D_c(x) \sim \frac{1}{\sqrt{1-x^2}}$.  Then
  $O(\frac{d}{\eps} \log \frac{d}{\delta\eps})$ samples suffice to
  recover with probability $1-\delta$ a degree $d$ polynomial $\wh{p}$
  with
  \[
    \max_{x \in [-1, 1]} \abs{p(x) - \wh{p}(x)} \leq (2 + \eps)\sigma.
  \]
  If $x_i$ is drawn from the uniform distribution instead, then
 $O(\frac{d^2}{\eps^2}\log \frac{1}{\delta})$ samples suffice for the
  same result.  In both cases, the recovery time is polynomial in the
  sample size.
\end{corollary}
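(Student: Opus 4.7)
The plan is to invoke Theorem~\ref{thm:main}, which succeeds whenever the sample set $S$ is $\alpha$-good for the size $m = O(d/\eps)$ Chebyshev partition. I fix a constant $\alpha$ with $\rho < \alpha < 1/2$ (which exists since $\rho<1/2$ is constant) and reduce the corollary to showing that, with the stated sample counts, $S$ is $\alpha$-good with probability at least $1-\delta$.

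Being $\alpha$-good means that in every interval $I_j$ strictly fewer than an $\alpha$ fraction of the $n_j$ samples in $I_j$ are outliers. Conditioned on $n_j$, the outlier count in $I_j$ is dominated by $\mathrm{Binomial}(n_j,\rho)$, so a multiplicative Chernoff bound gives single-interval failure probability $\exp(-\Omega((\alpha-\rho)^2 n_j))$; this is at most $\delta/(2m)$ provided $n_j \geq T := \Omega(\log(m/\delta))$. After a union bound over $j \in [m]$, it suffices to lower bound every $n_j$ by $T$ with probability at least $1 - \delta/2$.

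For the Chebyshev measure, each Chebyshev interval $I_j$ has measure exactly $1/m$ under $D_c$, so $\mathbb{E} n_j = n/m$ is uniform in $j$; taking $n = \Theta(m \log(m/\delta))$ makes all $n_j \geq T$ with high probability, giving $n = O(\tfrac{d}{\eps}\log\tfrac{d}{\delta\eps})$ as claimed. For the uniform measure, a quick computation shows $|I_j| = \cos\tfrac{\pi(j-1)}{m}-\cos\tfrac{\pi j}{m} = \Theta\bigl(\tfrac{1}{m}\sin\tfrac{\pi j}{m}\bigr)$, which is as small as $\Theta(1/m^2)$ at the extremal indices $j\in\{1,m\}$ and as large as $\Theta(1/m)$ in the middle. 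The narrow extremal intervals drive the bound, and a naive Chernoff plus union bound yields $n = \Theta(m^2\log(m/\delta))$.

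The main obstacle is that the corollary claims the uniform-case bound with only $\log(1/\delta)$ inside the logarithm rather than $\log(m/\delta)$. Saving this extra $\log m$ factor requires a more delicate argument than a blanket union bound: I would group the intervals into $O(\log m)$ width classes of geometrically decreasing widths (class $k$ containing the $O(2^k)$ intervals of width $\Theta(2^{-k}/m)$ at the correct scale) and apply a separate Chernoff bound within each class, exploiting the observation that the classes of wider intervals enjoy much stronger concentration and so can absorb their own union-bound cost. Combined with $\sum_j |I_j|=2$, this lets the total failure probability be controlled by $\delta$ without paying an extra $\log m$ factor. This is the step that takes the most care; everything else reduces to routine concentration of measure.
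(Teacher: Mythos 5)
Your proposal is correct and follows essentially the same route as the paper: reduce to $\alpha$-goodness for $\alpha=(\rho+\tfrac12)/2$, apply a Chernoff bound to the sample count and then to the outlier count in each interval, and union bound, with the uniform case requiring that the union bound not be paid uniformly over the $m$ intervals. The paper executes the step you flag as delicate by noting $p_j=\Theta(\min(j,m+1-j)/m^2)$ and summing the per-interval failure probabilities directly, $\sum_j e^{-\Omega(kp_j)} \le 2\sum_{j=1}^{m/2}\delta^{j}\le 3\delta$ for $k=O(m^2\log(1/\delta))$ --- exactly your observation that wider intervals concentrate strongly enough to absorb their own union-bound cost, with no dyadic grouping needed (and note your class counts are inverted: the narrow $\Theta(1/m^2)$-width intervals near $\pm1$ are the few, not the many).
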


\subsection{Impossibility results}

We show limitations on improving any of the three parameters used in
Corollary~\ref{cor:randomerrors}: the sample complexity, the
approximation factor, and the requirement that $\rho < 1/2$.

\paragraph{Sample Complexity.} Our result uses $O(d^2)$ samples from
the uniform distribution and $O(d \log d)$ samples from the Chebyshev
distribution.  We show in Section~\ref{sec:sampcomp} that both results
are tight. In particular, we show that it is impossible to get any
constant approximation with $o(d^2)$ samples from the uniform
distribution or $o(d \log d)$ samples from the Chebyshev distribution.

\paragraph{Approximation factor.} Our approximation factor is
$2 +\eps$.  Even with infinitely many samples and no outliers, can one
do better than a $2$-approximation for $\ell_\infty$ regression?  For
comparison, in $\ell_2$ regression a $1 + \eps$ approximation is
possible.  For these lower bounds, it is convenient to consider the
special case where the values $y_i$ are $y(x_i)$ for a function $y$
with $\norm{p - y}_\infty \leq \sigma$.  We show two lower bounds
related to this question.

First, we show that $\ell_\infty$ projection---that is, the algorithm
that minimizes $\norm{y - \wh{p}}_\infty$ over degree-$d$ polynomials
$\wh{p}$---can have error arbitrarily close to $2\sigma$.  Since our
algorithm is an outlier-tolerant version of $\ell_\infty$ projection,
we should not expect to perform better.

Second, we show that no proper learning algorithm can achieve a
$1+\eps$ guarantee.  In particular, we show for $d=2$ that any
algorithm with more than $2/3$ success probability must have
$C > 1.09$; this is illustrated in Figure~\ref{fig:lowerbound}.  For
general $d$, we show that $C > 1 + \Omega(1/d^3)$.

\begin{figure}
  \centering
  \includegraphics[width=0.48\textwidth]{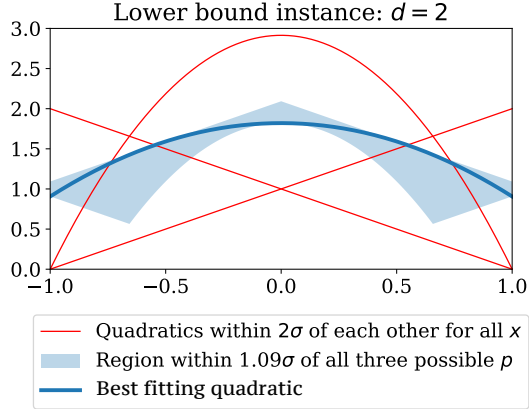}
  \caption{The lower bound for $1.09$-approximations via any recovery
    algorithm.  We give three quadratics that all lie within $2\sigma$
    of each other for all $x$, and hence the observed $y(x)$ can be
    identical regardless of which quadratic is $p$.  The region
    containing points within $1.09\sigma$ of all three options just
    barely doesn't contain a quadratic itself.}
  \label{fig:lowerbound}
\end{figure}

\paragraph{List decoding.}  The $\rho < 1/2$ requirement is obviously
required for uniquely decoding $p$, since for $\rho \geq 1/2$ the
observations could come from a mixture of two completely different
polynomials.  But one could hope for a list decoding version of the
result, outputting a small set of polynomials such that one is close
to the true output.  Unfortunately, \cite{AK03} showed that for a
$C$-approximate algorithm, such a set would require size at least
$e^{\Omega(\sqrt{d} / C)}$ event for $\rho \geq 1/2$.  We improve this
lower bound to $e^{\Omega(d / C)}$.  At least for constant $C$ and
$\rho$, our result is tight: Theorem~\ref{thm:main} implies that if no
samples were outliers, then $m = O(d)$ samples would suffice.  Hence
picking $m$ samples will result in Algorithm 2 outputting a polynomial consistent with the data with
$(1-\rho)^m = e^{-O(d)}$ probability. Repeating this would give a
set of size $e^{O(d)}$ that works with high probability.

\subsection{Related Work}

In addition to the work of~\cite{AK03,GZ16} discussed above, several
papers have looked at similar problems.  When $\sigma = 0$, the
problem becomes the standard one of Reed-Solomon decoding with
relative Hamming distance $\rho$.  Efficient algorithms such as
Berlekamp-Massey~\cite{WB86} give unique decoding for all
$\rho < 1/2$.  For $\rho > 1/2$, while unique decoding is impossible,
a polynomial size list decoding is possible---with sufficiently many
samples---for all $\rho < 1$~\cite{GS98}.

Other work has looked at robust estimation for distributions.  In the
field of robust statistics (see, e.g.,~\cite{huber}), as well as some
recent papers in theoretical computer science
(e.g. \cite{lai,dklms,csv}), one would like to estimate properties of
a distribution from samples that have a $\rho$ chance of being
adversarial.  In some such cases, list decoding for $\rho > 1/2$ is
possible~\cite{csv}.


\section{Preliminaries}

 For a function $f: [-1, 1] \to \R$ and interval $I \subseteq [-1, 1]$, we define the $\ell_{q}$ norm on the interval to be
\[
\norm{f}_{I,q} := \left(\int_{I_k} \abs{f(x)}^qdx\right)^{1/q},
\]
where $\norm{f}_{I,\infty} := \sup_{x \in I} \abs{f(x)}$.  We also
define the overall $\ell_q$ norm $\norm{f}_q := \norm{f}_{[-1,1],
  q}$. We will need the following consequence of a generalization due
to Nevai~\cite{N79} of Bernstein's inequality from $\ell_{\infty}$
to $\ell_{q}$. The proof of this lemma is in Appendix A.

\define{lem:polyconstant}{Lemma}{%
  Let $p$ be a degree $d$ polynomial.  Let $I_1, \dotsc, I_m$
  partition $[-1, 1]$ between the Chebyshev extrema $\cos \frac{\pi
    j}{m}$, for some $m \geq d$.  Let $r: [-1,1]\to \R$ be piecewise
  constant, so that for each $I_k$ there exists an $x^*_k \in I_k$
  with $r(x) = p(x^*_k)$ for all $x \in I_k$.  Then there exists a
  universal constant $C$ such that, for any $q \geq 1$,
  \[
  \norm{p - r}_q \leq C\frac{d}{m} \norm{p}_q.
  \]
}
\state{lem:polyconstant}

We recall the definition of the Chebyshev polynomials $T_d(x)$, which
we will use extensively.
\begin{definition}
  The Chebyshev polynomials of the first kind $T_d(x)$ are defined by
  the following recurrence: $T_0(x) = 1, T_1(x) = x$ and
  $T_{n+1}(x) = 2xT_n(x)-T_{n-1}(x)$.  They have the property that
  $T_d(\cos \theta) = \cos(d \theta)$ for all $\theta$.
      \end{definition}

\section{$\ell_1$ regression}

\begin{lemma}\label{lem:l1avg}
	Suppose $m \geq C\frac{d}{\eps}$ for a large enough constant $C$.
	Then, for any set of samples $x_1, \dotsc, x_n$ with all $S_i = \{j
	\mid x_j \in I_i\}$ nonempty, and any polynomial $p$ of degree $d$,
	\[
	\sum_i \frac{\abs{I_i}}{\abs{S_i}} \sum_{j \in S_i} \abs{p(x_j)} =
	(1 \pm \eps)\norm{p}_1.
	\]
\end{lemma}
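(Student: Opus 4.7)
The plan is to reduce the statement to Lemma~\ref{lem:polyconstant} via the triangle inequality. First I would compare the target sum directly to $\norm{p}_1 = \sum_i \int_{I_i}\abs{p(x)}\,dx$ by writing
\[
\sum_i \frac{\abs{I_i}}{\abs{S_i}}\sum_{j\in S_i}\abs{p(x_j)} - \norm{p}_1 = \sum_i \frac{1}{\abs{S_i}}\sum_{j \in S_i}\int_{I_i}\bigl(\abs{p(x_j)} - \abs{p(x)}\bigr)\,dx,
\]
using the fact that for each fixed $j\in S_i$ the constant $\abs{p(x_j)}$ integrated over $I_i$ contributes $\abs{I_i}\abs{p(x_j)}/\abs{S_i}$. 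Then by $\bigl|\abs{a}-\abs{b}\bigr|\leq\abs{a-b}$, the absolute value of the right-hand side is bounded by $\sum_i \frac{1}{\abs{S_i}}\sum_{j\in S_i}\norm{p - p(x_j)}_{I_i,1}$.

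The second step is to control this latter quantity using Lemma~\ref{lem:polyconstant}. For any choice function $\pi$ assigning each interval index $i$ to some $\pi(i)\in S_i$, the piecewise constant function $r(x) = p(x_{\pi(i)})$ for $x\in I_i$ satisfies the hypothesis of Lemma~\ref{lem:polyconstant}, so
\[
\sum_i \norm{p - p(x_{\pi(i)})}_{I_i,1} = \norm{p - r}_1 \leq C\frac{d}{m}\norm{p}_1.
\]
The quantity I actually need to bound is the average of this over choice functions $\pi$ in which $\pi(i)$ is drawn uniformly and independently from $S_i$; since every individual $\pi$ satisfies the same bound, the average does too. Combining,
\[
\left|\sum_i \frac{\abs{I_i}}{\abs{S_i}}\sum_{j\in S_i}\abs{p(x_j)} - \norm{p}_1\right| \leq C\frac{d}{m}\norm{p}_1,
\]
and taking $m \geq Cd/\eps$ with the constant absorbed appropriately yields the desired $(1\pm\eps)$ factor.

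I do not anticipate a real obstacle here: the proof is essentially a bookkeeping exercise that reduces a Riemann-sum-style approximation of $\norm{p}_1$ by samples to the piecewise-constant polynomial approximation already furnished by Lemma~\ref{lem:polyconstant}. The only subtlety is the handling of multiple samples per interval, which is cleanly dispatched by the averaging argument above, and the replacement of $\abs{p}$ by $p$ inside the norm, which is free via the pointwise inequality $\bigl|\abs{a}-\abs{b}\bigr|\leq\abs{a-b}$.
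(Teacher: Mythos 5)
Your proposal is correct and follows essentially the same route as the paper: the paper's (very terse) proof observes that the $\abs{S_i}=1$ case is immediate from Lemma~\ref{lem:polyconstant} with $q=1$ (via the reverse triangle inequality, exactly your $\bigl|\abs{a}-\abs{b}\bigr|\le\abs{a-b}$ step), and that the general left-hand side is the expectation over a uniform random choice of one sample per interval, which is precisely your averaging over choice functions $\pi$. Your write-up just makes explicit the bookkeeping the paper leaves implicit.
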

\begin{proof}
  When all $\abs{S_i} = 1$, this is a restatement of
  Lemma~\ref{lem:polyconstant} for $q=1$.  Otherwise, the LHS is the
  expectation of the $\abs{S_i} = 1$ case, when randomly drawing a
  single $j$ in each $S_i$.
\end{proof}

We will show a more precise statement than Lemma~\ref{lem:l1reg},
which allows for a weaker $\ell_1$ version of the $\alpha$-good
requirement on the samples:
\begin{definition}\label{def:l1close}
  For a set of samples $(x_1, y_1), \dotsc, (x_n, y_n)$, and the
  Chebyshev partition $I_1, \dotsc, I_m$, define
  $S_j = \{i \mid x_i \in I_j\}$.  We say that the samples are
  $(\alpha, \sigma)$ close to a polynomial $p$ in $\ell_1$ if, for
  \[
    e_j := \min_{S' \subset S_j, \abs{S'} \leq (1-\alpha) \abs{S_j}} \max_{j \in S'} \abs{p(x_j) - y_j},
  \]
  we have
  \[
    \sum_{j \in [m]} \abs{I_j} e_j \leq \sigma.
  \]
\end{definition}
If the samples are $\alpha$-good, then each $e_j \leq \sigma$, so
$\sum \abs{I_j} e_j \leq (\sum \abs{I_j}) (\max e_j) \leq 2 \sigma$
and the samples are $(\alpha, 2\sigma)$ close to $p$ in $\ell_1$.

We now state Lemma~\ref{lem:l1regprecise} which is a more precise
statement of Lemma~\ref{lem:l1reg}.
\begin{lemma}\label{lem:l1regprecise}
  Let $\alpha < 1/2$, and $m \geq C\frac{d}{\eps}$ for a large enough
  constant $C$ and some $\eps \leq (1 - 2\alpha)/4$.  Then, given any
  samples $(\alpha, \sigma)$ close to $p$ in $\ell_1$, the degree $d$
  polynomial solution to the $\ell_1$ regression problem
	\[
	\widehat{p} = \argmin_{q} \sum_{j\in [m]} \frac{\abs{I_j}}{\abs{S_j}}\sum_{i \in S_j} \abs{y_i - q(x_i)}
	\]
	has
	\[
	\norm{\wh{p} - p}_1 \leq \frac{2\sigma}{1-2\alpha}.
	\]
\end{lemma}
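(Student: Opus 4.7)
The plan is to set $q = \widehat{p} - p$, a degree $d$ polynomial, and bound $\norm{q}_1$ by comparing the $\ell_1$ regression objective at $\widehat{p}$ and at $p$. For each interval $I_j$, the $(\alpha,\sigma)$-closeness assumption lets me select a ``good'' subset $G_j \subseteq S_j$ with $|G_j| \geq (1-\alpha)|S_j|$ and $|y_i - p(x_i)| \leq e_j$ for every $i \in G_j$, writing $B_j = S_j \setminus G_j$ so that $|B_j| \leq \alpha|S_j|$. Denote by $W, W_G, W_B$ the weighted sums $\sum_j \tfrac{|I_j|}{|S_j|}\sum_{i \in \cdot} |q(x_i)|$ over all, good, and bad indices respectively; trivially $W = W_G + W_B$, and by Lemma~\ref{lem:l1avg} applied to $q$ I have $W = (1\pm \eps)\norm{q}_1$.

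Next, I would exploit optimality of $\widehat{p}$ for the regression objective. For $i \in G_j$, the triangle inequality gives $|y_i - \widehat{p}(x_i)| \geq |q(x_i)| - e_j$ while $|y_i - p(x_i)| \leq e_j$, so the $i$-th term contributes at least $|q(x_i)| - 2e_j$ to the difference of the two objectives. For $i \in B_j$, the reverse triangle inequality only guarantees a contribution of at least $-|q(x_i)|$. Summing the weighted contributions and using $\sum_j |I_j| e_j \leq \sigma$ together with $|G_j| \leq |S_j|$ yields
\[
  W_G - W_B \leq 2 \sum_j \tfrac{|I_j|}{|S_j|} |G_j| e_j \leq 2\sigma,
\]
i.e., $W - 2 W_B \leq 2\sigma$.

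The main obstacle is bounding $W_B$ in terms of $\norm{q}_1$, since a priori the bad points in each $S_j$ could cluster where $|q|$ is largest. I would handle this by a polynomial-smoothness argument: within each interval, $\sum_{i \in B_j}|q(x_i)| \leq |B_j|\, \norm{q}_{I_j,\infty} \leq \alpha |S_j|\, \norm{q}_{I_j,\infty}$, hence $W_B \leq \alpha \sum_j |I_j|\, \norm{q}_{I_j,\infty}$. To relate this sum to $\norm{q}_1$, pick $x^*_j \in I_j$ achieving $\norm{q}_{I_j,\infty}$ and let $r$ be the piecewise constant function $r(x) = q(x^*_j)$ on $I_j$. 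Then Lemma~\ref{lem:polyconstant} gives $\norm{q - r}_1 \leq C\tfrac{d}{m}\norm{q}_1 \leq \eps\norm{q}_1$ for $m$ sufficiently large, so $\sum_j |I_j|\, \norm{q}_{I_j,\infty} = \norm{r}_1 \leq (1+\eps)\norm{q}_1$, yielding $W_B \leq \alpha(1+\eps)\norm{q}_1$.

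Putting everything together, $(1-\eps)\norm{q}_1 \leq W \leq 2 W_B + 2\sigma \leq 2\alpha(1+\eps)\norm{q}_1 + 2\sigma$, which rearranges to
\[
  \norm{q}_1 \leq \frac{2\sigma}{(1-\eps) - 2\alpha(1+\eps)}.
\]
The hypothesis $\eps \leq (1-2\alpha)/4$ is exactly the slack needed to absorb the $\eps$-losses from Lemmas~\ref{lem:l1avg} and~\ref{lem:polyconstant}, and a careful accounting (possibly sharpening either the lower bound on $W$ or the upper bound on $W_B$ by using $W_G$ on both sides) should close the remaining gap and yield the stated $\norm{q}_1 \leq 2\sigma/(1-2\alpha)$.
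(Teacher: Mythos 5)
Your argument is sound and, at its core, takes the same route as the paper: compare the regression objective at $\wh{p}$ and at $p$, split each $S_j$ into good and bad points, apply the triangle inequality termwise, and use Lemma~\ref{lem:l1avg} to pass from the weighted discrete sums to $\norm{\wh{p}-p}_1$. The one genuinely different step is how you control $W_B$. The paper in effect applies Lemma~\ref{lem:l1avg} to the bad subsample and uses $\abs{B_j}/\abs{S_j}\le\alpha$ to get $W_B \le \alpha(1+\eps)\norm{q}_1$; you instead bound each interval's bad contribution by $\alpha\abs{I_j}\norm{q}_{I_j,\infty}$ and invoke Lemma~\ref{lem:polyconstant} on the piecewise-constant function of per-interval maxima. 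Both are valid, and yours is arguably the more explicitly justified of the two, since Lemma~\ref{lem:l1avg} as stated concerns the full sample with weights $\abs{I_j}/\abs{S_j}$, and applying it to a subsample while keeping those weights requires precisely the kind of rescaling you make explicit.

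The only real discrepancy is the constant. Your chain gives $\norm{q}_1 \le 2\sigma/\bigl((1-\eps)-2\alpha(1+\eps)\bigr) \le 4\sigma/(1-2\alpha)$ under $\eps\le(1-2\alpha)/4$, a factor $2$ short of the stated bound, and I do not see how the ``careful accounting'' you allude to closes it within your framework: the $2\sigma$ on the right-hand side is forced by the $2e_j$ loss from the triangle inequality on good points, and replacing $W_G = W - W_B$ by the direct bound $W_G \ge (1-\alpha)(1-\eps)\norm{q}_1$ only recovers an additive $2\alpha\eps\norm{q}_1$ in the denominator. You should not be troubled by this: the paper's own derivation reaches $(1-\alpha)\sigma$ on the right-hand side only by silently dropping the same factor of $2$ when it passes from $\sum_{i\in G_j} 2\abs{y_i-p(x_i)}$ to $\sum_{i\in G_j}\abs{y_i-p(x_i)}$, and the precise constant is immaterial downstream, where only $\norm{\wh{p}-p}_1 \le O_\alpha(1)\cdot\sigma$ is used.
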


\begin{proof}
	In each interval, we consider the $\alpha \abs{S_j}$ coordinates
	maximizing $\abs{y_i - p(x_i)}$ to be `bad', and the rest to be
	`good'. Let the set of `bad' and `good' coordinates in $S_j$ be denoted by $B_j$ and $G_j$ respectively, and define  $\obj(f) = \sum_{j\in [m]} \frac{\abs{I_j}}{\abs{S_j}}\sum_{i \in S_j} \abs{y_i - f(x_i)}$. By definition $\obj(\wh{p}) \leq \obj(p)$. This gives us 
	\begin{align*}
	0 &\geq \obj(\widehat{p}) - \obj(p)\\
	& = \sum_{j \in [m]}\frac{\abs{I_j}}{\abs{S_j}}\left(\sum_{i \in S_j} \abs{y_i - \widehat{p}(x_i)} - \sum_{i \in S_j} \abs{y_i - p(x_i)}\right)\\
	&= \sum_{j \in [m]}\frac{\abs{I_j}}{\abs{S_j}}\left(\left(\sum_{i \in G_j} \abs{y_i - \widehat{p}(x_i)}  - \abs{y_i - p(x_i)}\right) + \left( \sum_{i \in B_j} \abs{y_i - \widehat{p}(x_i)} - \abs{y_i - p(x_i)}\right)\right).
	\end{align*}
	From the triangle inequality, we have $ |y_i - \widehat{p}(x_i)| \geq |p(x_i) - \widehat{p}(x_i)| -   |y_i - p(x_i)|$ and $|y_i - \widehat{p}(x_i)|  - |y_i - p(x_i)| \geq -|p(x_i) - \widehat{p}(x_i)| $. This gives
	\begin{align*}
	0 & \geq \sum_{j \in [m]}\frac{\abs{I_j}}{\abs{S_j}}\left(\sum_{i \in G_j}(\abs{p(x_i) - \widehat{p}(x_i)} - 2\abs{y_i - p(x_i)}) - \sum_{i \in B_j} \abs{p(x_i) - \widehat{p}(x_i)}\right)
	\end{align*}
	Now, recall that our samples are $(\alpha, \delta)$ close to
        $p$ in $\ell_1$. This means (by Definition~\ref{def:l1close}) that for any
        `good' $i$, $\abs{y_i - p(x_i)} \leq e_j$ for a set of $e_j$'s
        that satisfy $\sum |I_j|e_j \leq \sigma$.  Therefore, for these $e_j$,
	\begin{align*}
	0 &\geq \sum_{j \in [m]}\frac{\abs{I_j}}{\abs{S_j}}\left(\sum_{i \in G_j}\abs{p(x_i) - \widehat{p}(x_i)} - \sum_{i \in B_j} \abs{p(x_i) - \widehat{p}(x_i)}\right) - \sum_{j \in [m]}\frac{\abs{I_j}}{\abs{S_j}}\left( \sum_{i \in G_j} \abs{y_i - p(x_i)}\right) \\
	& \geq \sum_{j \in [m]}\frac{\abs{I_j}}{\abs{S_j}}\left(\sum_{i \in G_j}\abs{p(x_i) - \widehat{p}(x_i)} - \sum_{i \in B_j} \abs{p(x_i) - \widehat{p}(x_i)}\right) - \sum_{j \in [m]}\frac{\abs{I_j}}{\abs{S_j}}\left((1-\alpha)|S_j| e_j\right)
	\end{align*}
	Using Lemma~\ref{lem:l1avg} on $p-\widehat{p}$ we now get 
	\begin{align*}
	0 \geq (1-\alpha)(1-\eps) \|p-\widehat{p}\|_1 - \alpha (1+\eps) \|p - \widehat{p}\|_1 -(1-\alpha)\sigma
	\end{align*}
	or
	\[
	(1-2\alpha-\eps)\norm{p-\widehat{p}}_1 \leq (1-\alpha)\sigma,
	\]
	Hence
	\[
	\norm{p-\widehat{p}}_1 \leq \frac{(1-\alpha)\sigma}{1-2\alpha-\eps}.
	\]
	For $\eps \leq \frac{1 - 2\alpha}{2}$, this gives the desired
	\[
	\norm{p-\widehat{p}}_1 \leq \frac{2\sigma}{1 - 2\alpha}.
	\]
	
\end{proof}

\section{$\ell_\infty$ regression}

Given a set of $\alpha < \frac{1}{2}$-good samples $S$, our goal is to find a degree $\wh{p}$
polynomial $q$ with $\norm{\wh{p}-p}_\infty \leq (2+\epsilon)\sigma$. We
start by proving Lemma~\ref{lem:medianrecov}, which we restate below for
clarity:
\begin{algorithm}
	\caption{Refinement method, analyzed in Lemma~\ref{lem:medianrecov} for $r=0$}\label{alg:refine}
	\begin{algorithmic}[1]
		\Procedure{$\text{Refine}(S = \{(x_i,y_i)\},\wh{p})$}{}
		
		\State $\widetilde{y}_j \gets \text{median}_{x_i \in I_j} y_i- \wh{p}(x_i)$
		\State Choose arbitrary $\wt{x}_j \in I_j$
		\State Fit degree $d$ polynomial $r$ minimizing $\|r(\wt{x}_j) - \widetilde{y}_j\|_{\infty}$
		\State $\widehat{p}' \gets \widehat{p} + r$
		\State \textbf{return} $\widehat{p}'$
		
		\EndProcedure
	\end{algorithmic}
\end{algorithm}
\begin{algorithm}
	\caption{Complete recovery procedure}\label{alg:main}
	\begin{algorithmic}[1]
		\Procedure{$\text{Approx}(S)$}{}
		\State $\widehat{p}^{(0)}\gets$ result of $\ell_1$ regression
		\For {$i \in [0, O(\log_{1/\eps} d)]$}
			\State $\widehat{p}^{(i+1)} \gets \text{REFINE}(S, \widehat{p}^{(i)})$
		\EndFor
		\State \textbf{return} $\widehat{p}^{( O(\log_{1/\eps} d))}$
	\EndProcedure
	\end{algorithmic}
      \end{algorithm}

\restate{lem:medianrecov}
\begin{proof} 
  Since more than half the points in any interval $I_j$ are such that
  $\abs{y_j -p(x_j)} \leq \sigma$, and $p$ is continuous, there must
  exist an $x'_j \in I_j$ satisfying
  \begin{align}
    |\wt{y}_j - p(x'_j)| \leq \sigma.\label{eq:maxgap}
  \end{align}

  We now define three piecewise-constant functions, $r(x), \wh{r}(x)$,
  and $\wt{r}(x)$, to be such that within each interval $I_j$ we have
  $r(x) = p(x'_j)$, $\wh{r}(x) = \wh{p}(\wt{x}_j)$, and
  $\wt{r}(x) = \wt{y}_j$.  By Lemma~\ref{lem:polyconstant}, for our
  choice of $m$ we have
  \begin{align}
    \norm{p - r}_\infty \leq \eps \norm{p}_\infty\text{~~~~~~~~and~~~~~~~~} \norm{\wh{p} - \wh{r}}_\infty \leq \eps \norm{\wh{p}}_\infty \label{eq:2}.
  \end{align}
  We also have by~\eqref{eq:maxgap} that $\norm{r - \wt{r}}_\infty \leq \sigma$, so by the triangle inequality
  \begin{align}
    \norm{p - \wt{r}}_{\infty} \leq \sigma + \eps\norm{p}_\infty.\label{eq:3}
  \end{align}
  Now, our choice of $\wh{p}$ ensures that
  \[
    \norm{\wh{r} - \wt{r}}_\infty = \max_{j \in [m]}\abs{\wh{p}(\wt{x}_j) - \wt{y}_j} \leq \max_{j \in [m]} \abs{p(\wt{x}_j) - \wt{y}_j} \leq \norm{p - \wt{r}}_\infty \leq \sigma + \eps\norm{p}_\infty.
  \]
  Combining with~\eqref{eq:2} and~\eqref{eq:3} gives by the triangle inequality that
  \begin{align}
    \norm{\wh{p} - p}_\infty \leq 2\sigma + 2\eps\norm{p}_\infty + \eps \norm{\wh{p}}_\infty.\label{eq:4}
  \end{align}
  To finish the proof, we just need a bound on $\norm{\wh{p}}_\infty$.
  Note that~\eqref{eq:4} implies
  \[
    \norm{\wh{p}}_\infty \leq 2\sigma + (1+2\eps)\norm{p}_\infty + \eps \norm{\wh{p}}_\infty,
  \]
  and since $\eps \leq 1/2$, this implies
  \[
    \norm{\wh{p}}_\infty \leq 4\sigma + (2+4\eps)\norm{p}_\infty.
  \]
  Plugging into~\eqref{eq:4} and rescaling $\eps$ down by a constant
  factor gives the result.
\end{proof} 

Before we prove Theorem~\ref{thm:main}, we briefly describe the
algorithm. Algorithm~\ref{alg:main} first sets its initial estimate to
the polynomial produced by $\ell_1$ regression. It then continues to
refine the estimate it has by using Algorithm~\ref{alg:refine} for
$O(\log_{1/\eps} d)$ iterations.


\restate{thm:main}
\begin{proof}
	
  Our algorithm proceeds by iteratively improving a polynomial
  approximation $\widehat{p}$ to $p$, so that
  $\norm{p-\widehat{p}}_\infty$ improves at each stage.  Our notation
  will be borrowed from the notation in the statements of
  Algorithms~\ref{alg:main} and~\ref{alg:refine}. Let
  $\widehat{p}^{(t)}(x)$ be the $t^{\text{th}}$ estimate of $p$ found by
  Algorithm~\ref{alg:main} and let $e_t(x) = (p-\widehat{p}^{(t)})(x)$
  be the error of the $t^{\text{th}}$ estimate $\widehat{p}^{(t)}$. $r_t$ is
  the polynomial $r$ found by the $t^{\text{th}}$ iteration of
  Algorithm~\ref{alg:refine}, i.e.
  $r_t = \argmin_{r} \|r(\tilde{x}_j) - \tilde{y}_j\|_{\infty}$ where
  $\widetilde{y}_j = \text{median}_{x_i \in I_j} y_i-
  \widehat{p}^{(t)}(x_i)$, and the minimum is taken over all degree
  $d$ polynomials. Lemma~\ref{lem:medianrecov} now
  implies
  \[
    \|r_t(x)-e_t(x)\|_{\infty} \leq (2+\eps)\sigma + \eps \|e_t(x)\|_{\infty}.
  \]
  Observe that
  $r_t(x) - e_t(x) = (\widehat{p}^{(t+1)}(x) - \widehat{p}^{(t)}(x)) -
  (p(x) - \widehat{p}^{(t)}(x)) = -e_{t+1}(x)$, which gives
  us
  \[\|e_{t+1}(x)\|_{\infty} \leq (2+\eps)\sigma +
    \epsilon\|e_t(x)\|_{\infty}.\] Proceeding by induction and using
  the geometric series formula we see
  \[\|p-\widehat{p}^{(t)}\|_{\infty} \leq
    \frac{2+\eps}{1-\epsilon}\sigma +
    \epsilon^t\|e_0\|_{\infty} \leq (2+4\eps)\sigma +
    \epsilon^t\|e_0\|_{\infty}. \] Rescaling
  $\epsilon$, we see that in a number of iterations logarithmic in the
  quality of our initial solution, we find a $\wh{p}$ such that
  $\|\wh{p}-p\|_{\infty} \leq (2+\eps)\sigma$.

  Finally, we analyze the quality of the initial solution produced by
  $\ell_1$ regression.  By Lemma~\ref{lem:l1reg},
  $\norm{e_0}_1 \leq O(\sigma)$.  Applying the Markov brothers'
  inequality to the degree $d+1$ polynomial
  $Q(x) = \int_{-1}^x e_0(u)du$, we get
  \[
    \norm{e_0}_\infty = \max_{x \in [-1,1]} |Q'(x)| \leq (d+1)^2 \max_{x \in [-1,1]}
    |Q(x)| \leq (d+1)^2\norm{e_0}_1 \leq O(d^2 \sigma).
  \]
  Hence the number of iterations required is $O(\log_{1/\eps} d)$. 
		\end{proof}
Applying this to our random outlier setting, we get Corollary~\ref{cor:randomerrors}.
\begin{proof}
  It is enough to show that for $m=O(\frac{d}{\eps})$, drawing
  $O(\frac{d}{\eps} \log \frac{d}{\delta\eps})$ samples from the
  Chebyshev distribution or
  $O(\frac{d^2}{\eps^2} \log(\frac{1}{\delta}))$ samples from the
  uniform distribution gives us an $\alpha$-good sample for some
  $\alpha< \frac{1}{2}$ with probability $1-\delta$. The corollary
  then follows from Theorem~\ref{thm:main}.

  Let $k$ be the total number of samples taken, and let $p_j$ denote
  the probability that any sampled $x_i$ lies in the $j^{\text{th}}$
  interval $I_j$.  With Chebyshev sampling,
  \[
    p_j = \int_{\cos\left(\frac{\pi j}{m}\right)}^{\cos\left(\frac{\pi
          (j+1)}{m}\right)} \frac{1}{\sqrt{1-x^2}} =
    \arcsin\left(\cos\left(\frac{\pi (j+1)}{m}\right)\right) -
    \arcsin\left(\cos\left(\frac{\pi j}{m}\right)\right) = \frac{1}{m}
  \]
  for all $j$.  With uniform sampling,
  $p_j = \Theta(\frac{\min(j, m+1-j)}{m^2})$.

  Let $X_j$ be the number of samples that appear in $I_j$, and $Y_j$
  be the number of these samples that are outliers.  We have
  $\E[X_j] = k p_j$, so by a Chernoff bound,
  \begin{align}
    \Pr[X_j \leq \frac{1}{2}k p_j] \leq e^{-\Omega(kp_j)}.\label{eq:6}
  \end{align}
  The outliers are then chosen independently, with expectation
  $\rho X_j$, so
  \begin{align}
    \Pr[Y_j \geq \alpha X_j \mid X_j] \leq e^{-\Omega((\alpha - \rho)^2X_j)}.\label{eq:7}
  \end{align}
  Setting $\alpha = \frac{\rho + \frac{1}{2}}{2}$, we have that
  conditioned on~\eqref{eq:6} not occurring,~\eqref{eq:7} occurs with
  at most $e^{-\Omega(k p_j)}$ probability.  If neither occur, then
  less than an $\alpha$ fraction of the samples in $I_j$ are outliers.
  Hence, by a union bound over the intervals, the samples are
  $\alpha$-good with probability at least
  \begin{align}
    1 - 2 \sum_{i=1}^m e^{-\Omega(k p_j)}.\label{eq:8}
  \end{align}
  In the Chebyshev setting, we have $p_j = 1/m$ and
  $k = O(m \log(m/\delta))$, making~\eqref{eq:8} at least $1-2\delta$
  for appropriately chosen constants.  In the uniform setting, we
  similarly have
  \[
    \sum_{j=1}^m e^{-\Omega(kp_j)} = \sum_{j=1}^m
    e^{-\Omega(m^2\log(1/\delta)\cdot \frac{\min(j, 1+m-j)}{m^2})} \leq 2\sum_{j=1}^{m/2} \delta^j \leq 3\delta,
  \]
  making~\eqref{eq:8} at least $1-6\delta$.  Rescaling $\delta$ gives
  the result, that the samples are $\alpha$-good for some
  $\alpha < 1/2$ with probability at least $1-\delta$.
\end{proof}

\section{Impossibility results}

\subsection{Sample Complexity}\label{sec:sampcomp}
\begin{lemma}
  If the $x_i$ are sampled uniformly then it is not possible to get an
  $O(1)$ approximation in $\ell_{\infty}$ norm to the original
  function in $o(d^2)$ samples and $1/4$ failure probability.
\end{lemma}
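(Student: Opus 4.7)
The plan is a Le Cam-style two-point lower bound: exhibit two degree-$d$ polynomials $p_0, p_1$ and two valid noise choices whose observation distributions coincide with probability $1-o(1)$ when $n=o(d^2)$ samples are drawn uniformly, while $\|p_1-p_0\|_\infty$ can be made an arbitrarily large constant multiple of $\sigma$. Any $C$-approximate algorithm must then produce a $\wh p$ satisfying both $\|\wh p\|_\infty\le C\sigma$ and $\|\wh p-p_1\|_\infty\le C\sigma$ with positive probability, whence $\|p_1\|_\infty\le 2C\sigma$ by the triangle inequality; tuning the separation above $2C\sigma$ yields the contradiction.

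The hard instances concentrate all disagreement in a tiny sliver near $x=1$. Fix a large constant $\delta$ (to be chosen in terms of $C$), set $\gamma=\delta/d^2$, $I=[1-\gamma,1]$, and let
\[
  p_0\equiv 0,\qquad p_1(x)\;=\;\sigma\,T_d\!\left(\tfrac{2x+\gamma}{2-\gamma}\right).
\]
The affine map $x\mapsto(2x+\gamma)/(2-\gamma)$ sends $[-1,1-\gamma]$ onto $[-1,1]$, so $|p_1|\le\sigma$ there. At $x=1$ the argument of $T_d$ equals $1+\gamma+O(\gamma^2)$, and combining the identity $T_d(\cosh u)=\cosh(du)$ with $\mathrm{arccosh}(1+\gamma)\sim\sqrt{2\gamma}$ gives $|p_1(1)|=\sigma\cosh(\sqrt{2\delta})\,(1+o(1))$, which is unbounded as $\delta\to\infty$. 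For the noise: in instance $0$ take $y_i=0$ (zero noise); in instance $1$ take $y_i=0$ when $x_i\notin I$ (noise $-p_1(x_i)$, which is a valid $\le\sigma$ perturbation since $|p_1|\le\sigma$ off $I$) and $y_i=p_1(x_i)$ when $x_i\in I$ (zero noise).

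Let $E$ be the event that no sample lands in $I$. Under the uniform measure on $[-1,1]$, $\Pr[x_i\in I]=\gamma/2$, so the expected count in $I$ is $n\delta/(2d^2)=o(1)$ for $n=o(d^2)$ and fixed $\delta$; Markov's inequality gives $\Pr[E]\ge 1-\eta$ for any desired constant $\eta>0$ once $n$ is small enough. On $E$ the observations $(x_i,y_i)$ are literally identical under the two instances, so after coupling the algorithm's internal randomness the output $\wh p$ has the same conditional distribution given $E$ in both. Suppose the algorithm has failure probability at most $1/4$; then the conditional success probability given $E$ is at least $(3/4-\eta)/(1-\eta)\ge 2/3$ in each instance, so a union bound on the common conditional distribution yields with positive probability a single $\wh p$ satisfying both $\|\wh p\|_\infty\le C\sigma$ and $\|\wh p-p_1\|_\infty\le C\sigma$. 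The triangle inequality then forces $\|p_1\|_\infty\le 2C\sigma$, contradicting $|p_1(1)|\ge \tfrac{1}{2}\sigma\cosh(\sqrt{2\delta})$ once $\delta$ is chosen with $\cosh(\sqrt{2\delta})>4C+2$. The only mildly delicate step is the endpoint amplification $T_d(1+\delta/d^2)\to\cosh(\sqrt{2\delta})$; everything else is standard bookkeeping.
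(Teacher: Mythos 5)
Your proposal is correct and follows essentially the same route as the paper: both use a two-point argument pairing the zero polynomial with a Chebyshev polynomial shifted/rescaled by $\Theta(1/d^2)$ so that it stays bounded by $\sigma$ except on a width-$\Theta(1/d^2)$ interval at the endpoint, which $o(d^2)$ uniform samples miss with constant probability, making the two instances indistinguishable. The only differences are cosmetic (an affine rescaling of $T_d$ instead of a plain shift, and Markov-plus-conditioning bookkeeping instead of directly bounding $(1-\gamma/2)^s$).
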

\begin{proof}
  Consider an algorithm that gives a $C$-approximation given $s$
  uniform samples with noise level $\sigma = 1$ and zero outliers, for
  $C = O(1)$ and $s = o(d^2)$.  We will construct two polynomials with
  $\ell_\infty$ distance more than $2C$, but for which the samples
  have a constant chance of being indistinguishable.

  Define the polynomials $g(x) = 0$ and
  $f(x) := T_d\left(x+\frac{\alpha}{d^2}\right)$ where $T_d$ is the
  degree $d$ Chebyshev polynomial of the first kind, for some
  $d \geq 4$ and constant $\alpha = 4 \sqrt{2(C-1)}$. By construction,
  $\abs{f(x)} \leq 1$ for
  $x \in \left[-1,1-\frac{\alpha}{d^2}\right]$, but $|f(1)| > 2C$
  because for $d \geq 4$
  \[|f(1)| = \left|T_d\left(1+\frac{\alpha}{d^2}\right)\right| =
    \left|\cosh\left(d \text{ arcosh}\left(1 +
          \frac{\alpha}{d^2}\right)\right)\right| > 2C.\] The final
  inequality above follows from the fact that
  $\cosh(d \text{ arcosh}(1 + \frac{x}{d^2})) \geq 1.9 (1 + x^2/8)$ at
  $d = 4$ and this function is increasing in both $d$ and $x$.  Since
  $\norm{f - g}_\infty > 2C$, no single answer can be a valid
  $C$-approximate recovery of both $f$ and $g$.

  Suppose one always observes samples of the form
  $(x_i, 0)$.  These samples are within $\sigma = 1$ of
  both $f(x)$ and $g(x)$ if they lie in the region
  $[-1, 1-\frac{\alpha}{d^2}]$.  The chance that all samples $x_i$ lie
  in this region is at least
  $(1 - \frac{\alpha}{2d^2})^s \geq e^{-s\alpha/d^2}$, which is
  $e^{-o(1)} > 1/2$.  Hence there is at least a $1/2$ chance that the
  samples from $f$ and $g$ are indistinguishable, so the algorithm has
  a failure probability more than $1/4$.
\end{proof}

	\begin{figure}
	\centering    
	\begin{tikzpicture}[xscale=3,yscale=3,domain=-pi:pi,samples=1000]
	\draw[-] (-1,0) -- (1,0) ;
	\draw[-] (0,-0.25) -- (0,1) ;
	\draw[black] plot ({cos(\x r)},{min(1, (-1/51)*((cos(51*\x r))/cos(\x r)))});
	\draw[dotted] plot({cos(\x r)}, {1});
	\draw (1,1) node[right] {$1$};
	\end{tikzpicture}
	\captionsetup{aboveskip=5pt}
	\caption{$p_b(x)$ for $d=50$ and $b = 0$.}
	\label{fig:fd50}
\end{figure}
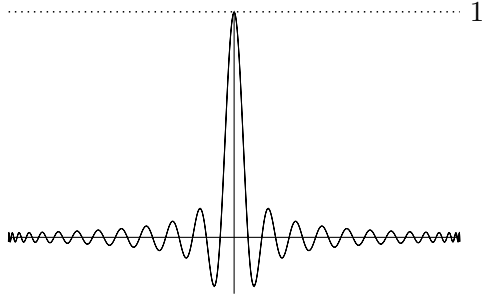

Our next lower bound will use the following lemmas, proven in
Appendix~\ref{app:indicator}.

\define{lem:indicatorpolynomial}{Lemma}{%
  Let $d \geq 1$.  For any point $b \in [-1, 1]$, there exists a
  degree $d$ polynomial $p_b$ such that $p_b(b) = 1$,
  $\norm{p_b}_\infty = 1$, and
  \[
    \abs{p_b(x)} \leq \frac{2}{d\abs{x - b}}
  \]
  for all $x \in [-1, 1]$.
}
\state{lem:indicatorpolynomial}

\define{lem:indicatorpolynomial2}{Lemma}{%
  For any $d$ and $\alpha > 0$, let $m = d\sqrt{\alpha}/2$, and
  define $b_j = -1 + \frac{2}{m} j$ for $j \in [m]$.  Consider the set of degree-$d$
  polynomials
  \[
    f_S(x) = \sum_{j \in S} p_{b_j}^2(x)
  \]
  for $S \subseteq [m]$.  For any $x \in [-1, 1]$, let $k_x \in [m]$ minimize
  $\abs{b_{k_x} - x}$.  Then for any $S \subseteq [m]$,
  \[
   f_{\{k_x\}\cap S}(x) \leq  f_S(x) \leq f_{\{k_x\}\cap S}(x) + \alpha.
  \]
}
\state{lem:indicatorpolynomial2}

\begin{lemma}
  For any distribution on sets $x=(x_1, \dotsc, x_s)$ of
  $s=o(d\log d)$ sample points with independent outlier chance
  $\rho = \Omega(1)$, it is not possible to get a robust $O(1)$
  approximation in $\ell_{\infty}$ norm to the original function with
  $1/4$ failure probability.
\end{lemma}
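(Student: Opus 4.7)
The plan is a Yao-style reduction that uses Lemma~\ref{lem:indicatorpolynomial2} to build a hard family of candidate polynomials. Fix any target approximation constant $C$, and set $\alpha = 1/(2C)$, $\sigma = \alpha/2$, and $m = d\sqrt{\alpha}/2 = \Theta(d)$; then $\{f_S : S \subseteq [m]\}$ is a family of degree-$d$ polynomials with $\norm{f_S - f_{S \triangle \{k\}}}_\infty = \norm{p_{b_k}^2}_\infty = 1$ while $|f_S(x) - f_{S \triangle \{k\}}(x)| \leq \alpha$ for every $x$ with $k_x \neq k$. Draw $S \subseteq [m]$ uniformly at random and let $p = f_S$ be the hidden polynomial. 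For each $k \in [m]$ let $D_k = \{i : k_{x_i} = k\}$ and let $A_k$ be the event ``every $i \in D_k$ is an outlier'' (vacuously true when $D_k = \emptyset$). Because the $D_k$ partition $[s]$, the $A_k$ are independent conditional on the samples, with $\Pr[A_k \mid x] = \rho^{|D_k|}$.

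On the event $E = \bigcup_k A_k$, pick any $k$ with $A_k$ holding. For every non-outlier $i$ (which then satisfies $i \notin D_k$) the adversary sets $y_i = \tfrac{1}{2}(f_S(x_i) + f_{S \triangle \{k\}}(x_i))$; by Lemma~\ref{lem:indicatorpolynomial2} this lies within $\alpha/2 = \sigma$ of both $f_S(x_i)$ and $f_{S \triangle \{k\}}(x_i)$, so it is a valid non-outlier value under either hypothesis. Outlier $y_i$ are set freely (identically under both hypotheses). The construction is symmetric in $S$ and $S \triangle \{k\}$, so the joint distribution of $(x, y)$ is identical whether $p = f_S$ or $p = f_{S \triangle \{k\}}$, and any algorithm returns the same $\wh{p}$. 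Since $\norm{f_S - f_{S \triangle \{k\}}}_\infty = 1 > 2C\sigma$, the triangle inequality forces failure on at least one of the two hypotheses; uniformity of $S$ then gives conditional failure probability at least $\tfrac{1}{2}$.

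It remains to show $\Pr[E] = 1 - o(1)$. Independence together with $1 - t \leq e^{-t}$ gives
\[
\Pr[E \mid x] \geq 1 - \exp\Bigl(-\sum_{k=1}^m \rho^{|D_k|}\Bigr),
\]
and convexity of $t \mapsto \rho^t$ with $\sum_k |D_k| \leq s$ yields $\sum_k \rho^{|D_k|} \geq m \rho^{s/m}$. With $s = o(d \log d)$, $m = \Theta(d)$, and $\rho = \Omega(1)$, we have $s/m = o(\log d)$ and hence $m\rho^{s/m} = \Theta(d) \cdot d^{-o(1)} = d^{1-o(1)} \to \infty$, so $\Pr[E] = 1 - o(1)$ uniformly in the sample distribution. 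The total failure probability is therefore at least $\tfrac{1}{2}(1 - o(1)) > \tfrac{1}{4}$ for all sufficiently large $d$.

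The main technical obstacle is the adversary's simulation step: we need the additive slack $\alpha$ in Lemma~\ref{lem:indicatorpolynomial2} to be at most $2\sigma$ so the averaging trick respects the noise budget, while simultaneously $2C\sigma < 1$ so that the two candidate polynomials lie outside a single $C\sigma$-ball. This forces the coupled choice $\alpha \asymp \sigma \asymp 1/C$, which keeps $m = \Theta(d)$; the $s = o(d \log d)$ threshold then emerges directly from the requirement $m\rho^{s/m} \to \infty$, and any weaker sample bound would let $\rho^{s/m}$ drop below $1/m$ and break the argument.
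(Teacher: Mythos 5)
Your proof is correct and follows essentially the same route as the paper's: the same hard family $f_S$ from Lemma~\ref{lem:indicatorpolynomial2}, the same ``some bucket is entirely outliers'' event, and the same indistinguishability of $f_S$ and $f_{S\triangle\{k\}}$. The only differences are cosmetic --- you use a midpoint adversary with $\sigma=\alpha/2$ where the paper flips a coin between the two polynomials with $\sigma=\alpha$, and you lower-bound $\Pr[\bigcup_k A_k]$ via Jensen's inequality where the paper counts the buckets with at most $2s/m$ points.
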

\begin{proof}
  Let $m = d/\sqrt{12C}$, and $f_S$ and $k_x$ be as in
  Lemma~\ref{lem:indicatorpolynomial2} for $\alpha = \frac{1}{3C}$.  For any
  $x$, let $L_j := \{i\in[s] \mid k_{x_i} = j\}$.  Since these sets
  are disjoint, there must exist at least $m/2$ different $j$ for
  which $\abs{L_j} \leq 2s/m = o(\log d)$.  Let $B \subseteq [m]$
  contain these $j$.
  We say that a given $L_j$ is ``outlier-full'' if all of the
  $x_i \in L_j$ are outliers, which for $j \in B$ happens with
  probability at least
  \[
    \rho^{\abs{L_j}} \geq 2^{-o(\log d)} \geq 1/\sqrt{d}.
  \]
  Hence the probability that at least one $L_j$ is outlier-full is at least
  \[
    1 - (1 - 1/\sqrt{d})^{\abs{B}} \geq 1 - e^{-m/(2 \sqrt{d})} > 0.99.
  \]

  Suppose the true polynomial $p$ to be learned is $f_S$ for a
  uniformly random $S$, and consider the following adversary.  If at
  least one $L_j$ is outlier-full, she arbitrarily picks one such
  $j^*$ and sets $S'$ to the symmetric difference of $S$ and
  $\{j^*\}$.  She then flips a coin, and with 50\% probability outputs
  $(x_i, f_S(x_i))$ for each $i$, and otherwise outputs
  $(x_i, f_{S'}(x_i))$ for each $i$.  This is valid for
  $\sigma = \frac{1}{3C}$, because for each $i \in [s]$, either
  $k_{x_i} = j^*$ (in which case $x_i \in L_{j^*}$ is an outlier) or
  $\{k_{x_i}\} \cap S = \{k_{x_i}\} \cap S'$ (in which case
  Lemma~\ref{lem:indicatorpolynomial2} implies
  $\abs{f_S(x_i) - f_{S'}(x_i)} \leq \alpha = \frac{1}{3C}$).

  Because the distribution on $j^*$ is independent of $S$, the
  distribution of $S'$ is also uniform, so the algorithm cannot
  distinguish whether it received $f_S$ or $f_{S'}$.  But
  $\norm{f_S - f_{S'}}_\infty = \norm{f_{\{i\}}}_\infty = 1 >
  2C\sigma$, so the algorithm's output cannot satisfy both cases
  simultaneously.  Hence, in the $99\%$ of cases where one $L_j$ is
  outlier-full, the algorithm will have $50\%$ failure probability,
  for $49.5\% > 1/4$ overall.
\end{proof}

\subsection{Approximation Factor}

Any algorithm that relies on the result of $\ell_{\infty}$ projection cannot do significantly better.  There are two functions $p(x)$ and $f(x)$ such that $|p(x)-f(x)| \leq \sigma$ for all $ x\in [-1,1]$, however the $\ell_{\infty}$ projection to the space of all degree $1$ polynomials is almost $2\sigma$ away in $\ell_{\infty}$ norm from $p$. 


\begin{lemma}\label{algotight} 
Let $d=1$, $p(x) = \sigma$ be a constant function and $f(x) = \frac{\sigma}{\alpha}\max\left(0, x - (1-2\alpha)\right)$ where $\alpha < \frac{1}{2}$. Note that $|p(x)-f(x)| \leq \sigma$ for all $ x\in [-1,1]$. The result $q = \argmin_{r} \| f - r \|_{\infty}$ of $\ell_{\infty}$ projection of $f$ to the space of degree-$1$ polynomials satisfies $\|p(x)-q(x)\|_{[-1,1],\infty} \geq (2-\alpha)\sigma$.
\end{lemma}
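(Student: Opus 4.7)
The plan is to identify the $\ell_\infty$-best linear approximation $q$ to $f$ via the Chebyshev equioscillation theorem, then read off $\|p-q\|_\infty$ directly. Since $f$ is piecewise linear with a single kink at $x = 1-2\alpha$, the three natural candidate extrema for the error $f - q$ are the points $-1$, $1-2\alpha$, and $1$.

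First I would guess $q(x) = ax + b$ and impose the equioscillation conditions with the natural alternating sign pattern at these three points:
\[
  a - b = E, \qquad -a(1-2\alpha) - b = -E, \qquad 2\sigma - a - b = E,
\]
corresponding to $e(-1) = E$, $e(1-2\alpha) = -E$, $e(1) = E$, where $e := f - q$. Solving this $3\times 3$ linear system gives $a = \sigma$, $b = \alpha\sigma$, and $E = (1-\alpha)\sigma$, so the candidate optimizer is $q(x) = \sigma x + \alpha\sigma$.

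Next I would verify that this is indeed the $\ell_\infty$ projection. On $[-1, 1-2\alpha]$ the error $f-q = -\sigma x - \alpha\sigma$ is affine with nonzero slope, so its extrema on this subinterval occur at the endpoints. On $[1-2\alpha, 1]$ the error has slope $\sigma/\alpha - \sigma > 0$ (since $\alpha < 1/2$ implies $1/\alpha > 1$), hence is monotone and again extremized at the endpoints. Thus the global extrema of $f-q$ on $[-1,1]$ lie among $\{-1,\,1-2\alpha,\,1\}$, where $e$ takes values $(1-\alpha)\sigma$, $-(1-\alpha)\sigma$, $(1-\alpha)\sigma$. This is a genuine equioscillation on three points with alternating signs, which by the Chebyshev alternation theorem certifies $q$ as the (unique) best degree-$1$ approximation.

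Finally I would compute $p - q$: writing $p(x) - q(x) = \sigma - \sigma x - \alpha\sigma = \sigma(1 - \alpha - x)$, which on $[-1,1]$ attains its maximum absolute value $(2-\alpha)\sigma$ at $x = -1$, yielding $\|p-q\|_{[-1,1],\infty} \geq (2-\alpha)\sigma$ as claimed. There is no real obstacle here — the only slightly delicate step is checking that the internal slopes on each linear piece are nonzero so that no extra extremum can sneak into the interior, which follows from $\alpha < 1/2$; once that is confirmed, equioscillation closes the argument and the final distance bound is an immediate evaluation at $x=-1$.
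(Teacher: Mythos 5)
Your proof is correct and follows essentially the same route as the paper: reduce to the three breakpoints $-1$, $1-2\alpha$, $1$, equalize the errors there to get the best linear approximant, and evaluate $p-q$ at $x=-1$; your explicit appeal to the alternation theorem to certify optimality is if anything cleaner than the paper's bare assertion that the min--max is attained when all three terms are equal. Note that your values $a=\sigma$, $b=\alpha\sigma$, i.e.\ $q(x)=\sigma x+\alpha\sigma$, are the correct ones (consistent with the paper's Figure~\ref{fig:linetight}); the paper's printed $b=-\alpha\sigma$ and $q(x)=2\sigma x-\alpha\sigma$ are typos.
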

\begin{proof}
 Observe that $f(x) = 0$ for $x \in [-1,(1-2\alpha)]$ and $f(x) = 2\left( \frac{\sigma}{\alpha}(x-1)+\sigma\right)$ in $[1-2\alpha, 1]$. The maximum difference between two linear equations in a closed interval is attained at the endpoints of that interval. This tells us 
 \[q(x) = \argmin_{r} \max \{|f(-1)-r(-1)|,|f(1-2\alpha)-r(1-2\alpha)|,|f(1)-r(1)|\} \] \[= \argmin_{r} \max \{|r(-1)|,|r(1-2\alpha)|,|2\sigma-r(1)|\}.\]
  Let $q(x) = ax + b$. $q(x)$ will be such that $f(-1) > q(-1)$,
 $f(1-2\alpha) < q(1-2\alpha)$ and $f(1) > q(1)$ (see Figure 4), and so we
 want \[\argmin_{(a,b)} \max \{a-b, a+b-2\alpha a,2\sigma-(a+b)\}.\] 
 This function is
 minimized when all three terms are equal, which happens when
 $a = \sigma$ and $b = -\alpha\sigma$.  This gives
 $q(x) = 2\sigma x - \alpha\sigma$, and $\|q(x)-p(x)\|_{[-1,1],\infty} =   (2-\alpha)\sigma$. 

 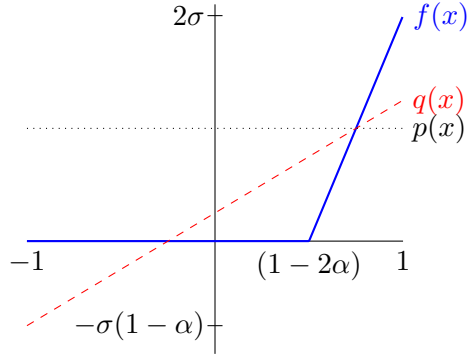
\begin{figure}
   \centering
   \begin{tikzpicture}[xscale=2.5,yscale=1.5,domain=-1:1,samples=400]
     \draw[-] (-1,0) -- (1,0) ;
     \draw[-] (0,-1) -- (0,2.1) ;
     \draw[blue, thick] plot (\x,{4*max(0, \x - (1-1/2)) )}) node[right] {$f(x)$};
     \draw[red, dashed] plot (\x,{\x + 1/4}) node[right] {$q(x)$};
     \draw[dotted] plot(\x, {1}) node[right] {$p(x)$};
     \draw (1/2,0) node[below] {$(1-2\alpha)$};
     \draw (1,0) node[below] {$1$};
     \draw (-1,0) node[below] {$-1$};
     \draw (0,2) node[left] {$2\sigma$};
     \draw(0,-3/4) node[left] {$-\sigma(1-\alpha)$};
     \draw (-.04, -3/4) -- (.04, -3/4);
     \draw (-.04, 2) -- (.04, 2);
   \end{tikzpicture}
   \captionsetup{aboveskip=5pt}
   \caption{$q(x)$ makes an error of $(2-\alpha)\sigma$ with $p(x)$.}
   \label{fig:linetight}
 \end{figure}
\end{proof}

We now show that that one cannot hope for a proper
$(1+\eps)$-approximate algorithm, even with no outliers.  We will
present a set of polynomials such that no two are more than $2$-apart,
but for which no single polynomial lies within $\alpha > 1$ of all
polynomials in the set.  Then an adversary with $\sigma = 1$ can
output a function $y(x)$ independent of the choice of polynomial in
the set, forcing the algorithm to by $\alpha$-far when recovering some
polynomial in the set.

\begin{lemma}
  There exist three degree $\leq 2$ polynomials, all within $2$ of each other
  over $[-1, 1]$, such that any single quadratic function has
  $\ell_\infty$ distance more than $1.09$ from one of the three.
\end{lemma}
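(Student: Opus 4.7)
The plan is to exhibit three explicit degree-$\leq 2$ polynomials $p_1, p_2, p_3$ (namely those depicted in Figure~\ref{fig:lowerbound}), verify directly that $\|p_i - p_j\|_\infty \leq 2$ on $[-1,1]$ for all $i \neq j$, and then show that no quadratic $\wh{p}$ can satisfy $\|\wh{p} - p_i\|_\infty \leq 1.09$ simultaneously for every $i$. The choice of the $p_i$ is delicate: natural symmetric triples such as $\{1, -1, T_2\}$ admit $\wh{p} \equiv 0$ with max error exactly $1$ and give nothing past the triangle-inequality bound, so symmetry must be broken enough to push the Chebyshev-center radius above $1.09$ while keeping the pairwise diameter at most $2$.

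The key reformulation is that $\|\wh{p} - p_i\|_\infty \leq \alpha$ for every $i$ is equivalent to the pointwise corridor
\[
\wh{p}(x) \in \bigl[\max_i p_i(x) - \alpha,\; \min_i p_i(x) + \alpha\bigr] \qquad \text{for all } x \in [-1,1].
\]
At a point $x^*$ where the pointwise spread $\max_i p_i(x^*) - \min_i p_i(x^*)$ equals exactly $2$, the corridor shrinks to a single value at $\alpha = 1$ and to an interval of width $2(\alpha - 1)$ for $\alpha$ slightly larger, so $\wh{p}(x^*)$ is nearly pinned. Choosing the $p_i$ so that the spread equals $2$ at four or more distinct points in $[-1,1]$ then yields more near-pin conditions than the three coefficients of $\wh{p}(x) = ax^2 + bx + c$ can simultaneously satisfy, which is how a lower bound strictly above $\alpha = 1$ arises.

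To make this quantitative, I would parameterize $\wh{p}(x) = ax^2 + bx + c$ and note that each constraint $\|\wh{p} - p_i\|_\infty \leq \alpha$ reduces, since $\wh{p} - p_i$ is itself a quadratic, to finitely many linear inequalities evaluated at the endpoints $\pm 1$ and at the (at most one) interior critical point of $\wh{p} - p_i$ on $[-1,1]$. This produces a linear program in $(a, b, c, \alpha)$ whose optimum equals the best achievable approximation factor for the given triple. To certify that this optimum exceeds $1.09$, I would exhibit a short dual solution: nonnegative weights on a small collection of these inequalities that cancel the coefficient variables $(a,b,c)$ but whose weighted right-hand side is strictly greater than $1.09$. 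Verifying such a dual reduces to a direct rational computation.

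The main obstacle is picking the triple $(p_1, p_2, p_3)$ so that the feasibility corridor is narrow enough at four or more well-chosen points to exclude every quadratic by a margin of at least $0.09$, while the $p_i$ themselves remain at pairwise $\ell_\infty$ distance at most $2$; this trade-off is tight and rules out the most symmetric constructions, so the bulk of the work goes into the calibration of the three polynomials. Once the correct polynomials and the relevant four or five binding points are identified, the LP reduction and the dual certificate are routine.
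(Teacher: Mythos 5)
Your overall strategy---exhibit three explicit quadratics whose pairwise $\ell_\infty$-distance equals $2$ at several points, reformulate the approximation constraint as a pointwise corridor, and show the corridor excludes every quadratic---is exactly the paper's. But for an existence lemma the construction \emph{is} the content, and you never produce it: you explicitly defer ``the calibration of the three polynomials'' as the bulk of the work. The paper's witness is $p_1(x)=x+1$, $p_2(x)=1-x$, $p_3(x)=\frac{3+2\sqrt{2}}{2}(1-x^2)$; the pairwise spread equals $2$ at $x=\pm 1$ (between $p_1$ and $p_2$) and at $x=\mp v$ with $v=\frac{1}{3+2\sqrt{2}}$ (between $p_1,p_2$ and $p_3$), forcing any $1$-approximator through the four points $(\pm 1, 1)$ and $(\pm v, 2-v)$. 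Without a concrete triple, nothing in your argument can be checked.

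There is also a conceptual mismatch in the mechanism you propose. You claim the contradiction arises because four or more near-pin conditions are ``more than the three coefficients can simultaneously satisfy,'' i.e.\ the pins cannot be interpolated by any quadratic. In the paper's construction the four pin points \emph{are} co-quadratic: by the $x\mapsto -x$ symmetry they determine a unique $ax^2+c$, and the contradiction instead comes from a fifth location, where that quadratic violates the corridor constraint $ax^2+c\le p_1(x)+1$ on $(-1,-v)$ because $p_1(x)+1$ is a chord of a concave parabola between two points lying on it. So you would need either to verify that your (unspecified) triple really has non-co-quadratic pins, or to add this extra corridor-violation step. Your LP dual-certificate idea for certifying the quantitative $1.09$ is sound---restricting the sup-norm to a finite grid only shrinks the inner maximum, so a gridded min-max lower bound is valid, and a rational dual certificate would actually be \emph{more} rigorous than the paper's appeal to ``running a program to optimize parameters''---but it cannot be executed until the polynomials are on the table.
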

\begin{proof}
  Consider the polynomials $p_1(x) = (x+1)$, $p_2(x) = (1-x)$,
  $p_3(x) = \frac{3+2\sqrt{2}}{2}(1-x^2)$, and let $v$ denote
  $\frac{1}{3+2\sqrt{2}}$ (see
  Figure~\ref{fig:counterexample}). Observe that $p_1$ and $p_2$ are
  at distance 2 from each other at $x = 1, -1$. Also $p_1$ is at
  distance $2$ from $p_3$ at $x = -v$, and similarly $p_2$ is at
  distance $2$ from $p_3$ at $x = v$. Hence, any polynomial that wants
  to $1$-approximate all the $p_i$'s necessarily has to go through the
  points $(1,1), (-1,1), (v,2-v), (-v,2-v)$. By symmetry, we know that
  any quadratic that goes through these will be of the form
  $y = ax^2 + c$. Substituting these values in the equation and
  solving for $a$ and $c$ we see that $a = \frac{1}{v+1}$ and
  $c = 1-\frac{1}{v+1}$.

  Since the quadratic $ax^2 +c$ has to $1$-approximate the $p_i$ it
  must be the case that $ax^2 + c \leq p_1(x)+1 = x+2$ at all
  points. However this inequality is not satisfied in the interval
  $(-1,-v)$ as shown in Figure~\ref{fig:counterexample}. This is
  because $p_1(x)+1$ is the line between two points on the curve
  $ax^2 + c$, which is a concave function for $a < 0$.  Running a
  program to optimize parameters, we see that the best approximation
  to these polynomials will make an error of at least $1.09$ with one
  of the polynomials (see Figure~\ref{fig:lowerbound}).

  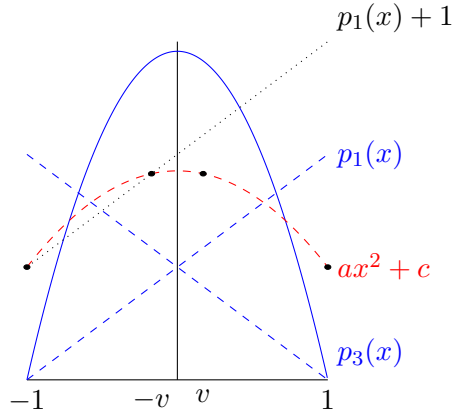
\begin{figure}
    \centering
    \begin{tikzpicture}[xscale=2,yscale=1.5,domain=-1:1,samples=400]
      \draw[-] (-1,0) -- (1,0) ;
      \draw[-] (0,0) -- (0,3) ;
      \draw[blue, dashed] plot (\x,{\x+1}) node[right] {$p_1(x)$};
      \draw[blue, dashed] plot (\x,{1-\x});
      \draw[blue] plot(\x, {2.914213*(1-\x*\x)}) node[above right] {$p_3(x)$};
      \draw[red,dashed] plot(\x, {-0.8535*\x*\x + 1.8535}) node[right] {$ax^2+c$};
      \draw[black,dotted] plot(\x, {2+\x}) node[above right] {$p_1(x)+1$};
      \draw (-0.17157,0) node[below] {$-v$};
      \draw (0.17157,0) node[below] {$v$};
      \draw (1,0) node[below] {$1$};
      \draw (-1, 0) node[below] {$-1$};
      \foreach \Point in {(-1,1), (1,1), (-0.17157,1.82842), (0.17157, 1.82842)} {\draw[fill=black] \Point circle (0.02);}
    \end{tikzpicture}
    \captionsetup{aboveskip=5pt}
    \caption{The quadratic that passes through the four
      points above does not satisfy the inequality
      $ax^2 + c \leq p_1(x)+1$ in the range $[-1,-v]$, for
      $v = \frac{1}{3+2\sqrt{2}}$.}
    \label{fig:counterexample}
  \end{figure}
\end{proof}

	\begin{figure}
		\centering
		\begin{tikzpicture}[xscale=3,yscale=3,domain=-1:1,samples=400]
		\draw[-] (-1,0) -- (1,0) ;
		\draw[-] (0,-.2) -- (0,1.5) ;
		\draw[red, dashed] plot (\x,{ (1/16)*(4*\x+1)^2});
		\draw[red, dashed] plot(\x, {(1/16)*(4*\x-1)^2});
		\draw[black] plot(\x, {(1/16)*(16*(\x)^2 + 1/2)});
		\draw[black] plot(\x, {(1/16)*(-(4*\x)^2 + 4^2)});
		\draw[dotted] (1/8,0) -- (1/8,1.5);
		\draw[dotted] (-1/8,0)--(-1/8,1.5);
	    \draw[dotted] plot(\x, {(1/16)*(-(4*(1/8))^2 + 4^2)});
		\draw[dotted] plot(\x, {(1/16)*(4*(1/8)-1)^2});
		\foreach \Point in {(1/8,(1/2-1/96)} {\draw[fill=black] \Point circle (0.02);}
	 	\foreach \Point in {(-1/8,1/2-1/96)}{\draw[fill=black] \Point circle (0.02);}
 		\foreach \Point in {(0,1/2+1/48)}{\draw[fill=black] \Point circle (0.02);}
		\end{tikzpicture}
		\hspace{1cm}
		\begin{tikzpicture}[xscale=3,yscale=3,domain=-1:1,samples=400]
                  \begin{scope}
                \clip  (-1,-0.25) rectangle (1,1.25);
		\draw[-] (-1,0) -- (1,0) ;
		\draw[-] (0,-0.25) -- (0,1.5) ;
		\draw[red, dashed] plot (\x,{ (1/16)*(4*\x+1)^2});
		\draw[red, dashed] plot(\x, {(1/16)*(4*\x-1)^2});
		\draw[black] plot(\x, {(1/16)*(16*(\x)^2 + 1/2)});
		\draw[black] plot(\x, {(1/16)*(-(4*\x)^2 + 4^2)});
		\draw[red, dashed] plot (\x,{ (1/16)*(4*(\x-0.5)+1)^2});
		\draw[red, dashed] plot(\x, {(1/16)*(4*(\x-0.5)-1)^2});
		\draw[black] plot(\x, {(1/16)*(16*((\x-0.5))^2 + 1/2)});
		\draw[black] plot(\x, {(1/16)*(-(4*(\x-0.5))^2 + 4^2)});
		\draw[red, dashed] plot (\x,{ (1/16)*(4*(\x+0.5)+1)^2});
		\draw[red, dashed] plot(\x, {(1/16)*(4*(\x+0.5)-1)^2});
		\draw[black] plot(\x, {(1/16)*(16*((\x+0.5))^2 + 1/2)});
		\draw[black] plot(\x, {(1/16)*(-(4*(\x+0.5))^2 + 4^2)});
		\foreach \Point in {(1/8,1/2-1/96)} {\draw[fill=black] \Point circle (0.02);}
		\foreach \Point in {(-1/8,1/2-1/96)}{\draw[fill=black] \Point circle (0.02);}
		\foreach \Point in {(0,1/2+1/48)}{\draw[fill=black] \Point circle (0.02);}
				\foreach \Point in {(1/8+0.5,1*1/2-1/96)} {\draw[fill=black] \Point circle (0.02);}
		\foreach \Point in {(-1/8+0.5,1/2-1/96)}{\draw[fill=black] \Point circle (0.02);}
		\foreach \Point in {(0.5,1/2+1/48)}{\draw[fill=black] \Point circle (0.02);}
				\foreach \Point in {(1/8-0.5,1*1/2-1/96)} {\draw[fill=black] \Point circle (0.02);}
		\foreach \Point in {(-1/8-0.5,1/2-1/96)}{\draw[fill=black] \Point circle (0.02);}
		\foreach \Point in {(-0.5,1/2+1/48)}{\draw[fill=black] \Point circle (0.02);}
                  \end{scope}
		
		\end{tikzpicture}
		\captionsetup{aboveskip=5pt}
		\caption{Construction for Lemma~\ref{lem:dapprox}. On
                  the left, any $1$-approximator necessarily has to go
                  through the three points. On the right, placing
                  translated copies of these functions force any
                  $1$-approximation to perform more than $d$
                  oscillations. These figures are not to scale. }
		\label{fig:degdlower}
	\end{figure}
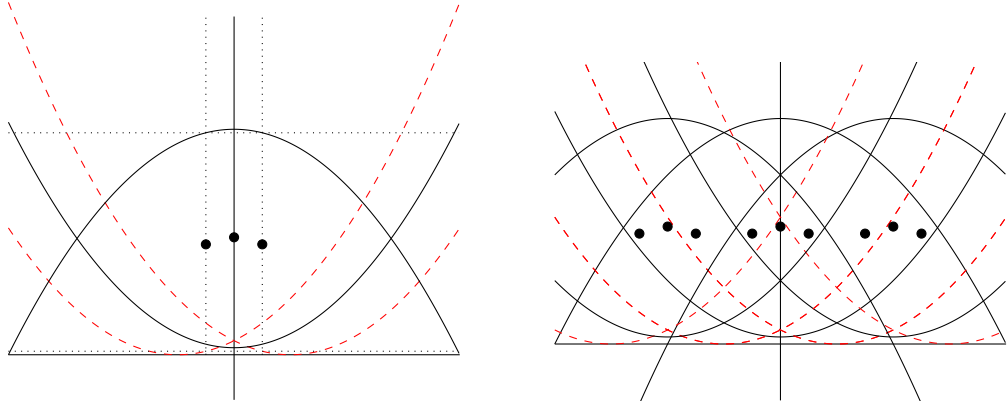

        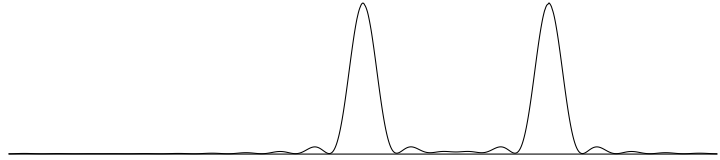
\begin{figure}
  \centering
  \begin{tikzpicture}[xscale=3,yscale=2,domain=0:pi, samples=1010]
    \draw[black] plot ({\x},{min(1, (1/21^2)*(((cos(21*\x r))/cos(\x r))^2)) + min(1, (1/21^2)*((cos(21*(\x + 31*pi/42) r))/cos((\x + 31*pi/42) r))^2)});
    \draw[-] (0,0) -- (pi,0) ;
  \end{tikzpicture}
  \caption{One element of $f_{S}$ from Lemma~\ref{lem:indicatorpolynomial2}.}
  \label{fig:Indicatorsum}
\end{figure}

\begin{lemma}\label{lem:dapprox}
  There exist $O(d)$ degree $2$ polynomials such that any degree $d$ polynomial that tries to approximate all these polynomials has to make an error of $\Omega(\frac{1}{d^3})$ with at least one of the degree $2$ polynomials.  
\end{lemma}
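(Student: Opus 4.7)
The plan is to generalize the three-polynomial construction of the previous lemma (the $d=2$ case) by tiling $[-1,1]$ with $\Theta(d)$ translated and rescaled copies, yielding $O(d)$ degree-$2$ polynomials whose pointwise envelope forces any degree-$d$ approximator to oscillate more times than its $d-1$ critical points allow. This is exactly the picture of Figure~\ref{fig:degdlower}: the left panel shows the unit ``bump,'' a constant-size family of quadratics whose envelope pinches at three forced points in an up-down-up pattern, and the right panel shows translated copies of this bump placed across $[-1,1]$.

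Concretely, I would place $m = \Theta(d)$ bumps at centers $b_k = -1 + (2k-1)/m$, each occupying a window of width $\Theta(1/d)$ and contributing a constant number of quadratics, for $O(d)$ degree-$2$ polynomials in total. Within each window, the local quadratics reproduce (up to rescaling) the construction of the previous lemma, so that any approximator $q$ must approximately pass through three points $(b_k - a, h_1)$, $(b_k, h_2)$, $(b_k + a, h_1)$ with $h_2 > h_1$. The vertical scaling is tuned so that all $O(d)$ quadratics remain globally within $2\sigma$ of each other on $[-1,1]$, not merely inside their home windows, since the previous lemma's style of adversary requires pairwise closeness over the entire interval.

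Given any degree-$d$ polynomial $q$ with $\norm{q - p_i}_\infty \leq \alpha \sigma$ for every $p_i$ in the family, the forced three-point condition inside each window requires $q$ to have a local maximum there. A degree-$d$ polynomial has at most $d-1$ critical points, so for $m \geq d$ at least one window is ``deficient,'' containing no local maximum of $q$. In that deficient window $q$ is monotonic on $[b_k - a, b_k + a]$; since $q(b_k \pm a) \approx h_1$, monotonicity forces $q(b_k) \approx h_1$, producing an error of at least $h_2 - h_1 - O(\alpha\sigma)$ at $x = b_k$. Choosing the scaling so that $h_2 - h_1 = \Theta(\sigma/d^3)$ then yields the claimed $\Omega(1/d^3)$ error bound.

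The main obstacle will be the quantitative balancing in the construction: simultaneously achieving (i) $\Theta(d)$ non-overlapping bumps inside $[-1,1]$, (ii) pairwise proximity of all $O(d)$ quadratics within $2\sigma$ globally, and (iii) a forced gap $h_2 - h_1 = \Theta(\sigma/d^3)$ inside each bump. The global proximity requirement forces the leading coefficients of the bump quadratics across copies to be nearly matched, which in turn limits the attainable forced gap to $\Theta(\sigma/d^3)$ rather than a constant. Verifying this quantitative balance, and confirming that the extremum count remains tight under the relaxed $(1+\eps)$-approximation tolerance, is the technical heart of the argument.
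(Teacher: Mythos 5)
Your proposal follows essentially the same route as the paper's proof: tile $[-1,1]$ with $\Theta(d)$ translated copies of a constant-size family of quadratics that are pairwise within $2\sigma$ globally, observe that each copy forces the approximator to oscillate once, invoke the bound of at most $d$ oscillations (equivalently $d-1$ critical points) for a degree-$d$ polynomial to find a deficient window, and note that the global-proximity constraint caps the forced oscillation height at $\Theta(1/d^3)$. The paper uses a slightly different base family of four quadratics ($1-Ax^2$, $A(x\pm c)^2$, $Ax^2+\tfrac{Ac^2}{2}$ with $A=\tfrac{1}{2d}$, $c=\tfrac{1}{4d}$) rather than rescaling the three-polynomial construction of the previous lemma, but the structure and quantitative balancing are the same.
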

\begin{proof} 
	We will consider a set of quadratic equations such that any two are at most distance $\sigma$ from each other, and such that any polynomial that has to $\sigma$-approximate them needs to perform $2d$ oscillations. As no degree $d$ polynomial can perform more than $d$ oscillations, there is at least one oscillation that this polynomial does not perform, and hence the approximating polynomial will make an error of at least $1+h$ where $h$ is the height of the oscillation. 
	
	 Observe that $\| (1-Ax^2) - A(x\pm c)^2\|_{\infty} = 1-\frac{Ac^2}{2}$ and this is achieved at $x = \pm \frac{c}{2}$. This means the set $S_0 = \{ 1-Ax^2, A(x-c)^2, A(x+c)^2, Ax^2 + \frac{Ac^2}{2}\}$ has every polynomial at most at distance $2\sigma := 1-\frac{Ac^2}{2}$ from each other. Any $\sigma$ approximator, hence, necessarily has to go through the three points as shown in Figure~\ref{fig:degdlower}. Define $S_t = \{ 1-A(x-2ct)^2, A(x-2ct-c)^2, A(x-2ct+c)^2, A(x-2ct)^2 + \frac{Ac^2}{2}\}$ Now observe that any $\sigma$ approximator to $S = \cup_{t \in [-1.5\cdot d,1.5 \cdot d]} S_t$ will necessarily have to perform $2d$ oscillations. We will now define the parameters such that these $2d$ oscillations take place in the range $[-1,1]$ and every pair of functions is at distance at most $1-\frac{Ac^2}{2}$ from each other.  Set $A = \frac{1}{2d}$ and $c = \frac{1}{4d}$. This ensures that every pair of elements in $S_t$ are at most at distance $1-\frac{1}{64d^3}$ from each other.
	 
	 We now show that if $p \in S_t$ and $q \in S_{t'}$ for $t < t'$, then $\|p - q\|_{\infty} \leq 1-\frac{1}{64d^3}$. Observe that it is enough to check this for $p = 1-A(x-2ct)^2$ and $q = A(x-2ct'+c)^2$. Because of our choice of $A,c$ 
	 \begin{align*}
	\left| 1-A(x-2ct)^2 - A(x-2ct'+c)^2 \right| 
	&= \left|1 - \frac{1}{2d} \left( \left(x-\frac{2t}{4d}\right)^2 + \left(x - \frac{2t'-1}{4d}\right)^2 \right)\right| \\
	&  \leq \left| 1 - \frac{1}{2d}\left(\frac{2}{4} \cdot \frac{(2(t'-t) -1)^2}{16d^2}\right)\right| \\
	& \leq 1 - \frac{1}{64d^3}
	 \end{align*}
	 Finally, observe that we force the approximating polnomial to perform one oscillation for every $S_t$, and if $c = \frac{1}{4d}$ the polynomial has to perform $\frac{8d}{3}$ oscillations to $\sigma$-approximate every polynomial in $S$ in the interval $[-1,1]$ because it has to perform one oscillation in every interval of length $3c$. Since no degree $d$ polynomial can perform $2d$ oscillations, there is at least one oscillation that it cannot perform, and so the approximating polynomial necessarily has to make an error of at least $1 + h$ with one of the polynomials in $S$, where $h$ is the height of the oscillation, which is $\Omega(\frac{1}{d^3})$.

\end{proof}

\subsection{List decoding}

We now show that if the probability of getting a bad sample were
greater than $\frac{1}{2}$, then it is not possible to find $\poly(d)$
polynomials of degree $O(d)$ such that one of the polynomials is close
to the original polynomial.

\begin{theorem}\label{thm:listdecode}
  Consider any algorithm for robust polynomial regression that returns
  a set $L$ of polynomials from samples with outlier chance
  $\rho = 1/2$, such that at least one element of $L$ is an
  $C$-approximation to the true answer with $3/4$ probability.  Then
  $\E[\abs{L}] \geq \frac{3}{4} 2^{\Omega(d/\sqrt{C})}$.
\end{theorem}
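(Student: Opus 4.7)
The plan is to build a large family of well-separated degree-$d$ polynomials and show that an adversary at outlier rate $\rho = 1/2$ can produce samples whose distribution does not depend on which polynomial in the family is the truth; then any algorithm that succeeds with probability $3/4$ over a uniform choice from the family must return a list that covers a $3/4$ fraction of the family. Concretely, I will invoke Lemma~\ref{lem:indicatorpolynomial2} using indicators $p_{b_j}$ of degree $d/2$ and with $\alpha = 1/(4C)$, which yields $m = \Theta(d/\sqrt{C})$ together with the degree-$d$ family $f_S(x) = \sum_{j \in S} p_{b_j}^2(x)$ for $S \subseteq [m]$. Set $\sigma := \alpha$ and restrict to the sub-family $\mathcal{F} = \{S \subseteq [m] : |S| = m/2\}$, whose size is $\binom{m}{m/2} = 2^{\Omega(d/\sqrt{C})}$.

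First I verify pairwise separation. For distinct $S, S' \in \mathcal{F}$, pick any $j \in S \triangle S'$ and evaluate at $b_j$. Since $p_{b_j}^2(b_j) = 1$, Lemma~\ref{lem:indicatorpolynomial2} forces $f_S(b_j)$ and $f_{S'}(b_j)$ to land in the disjoint intervals $[0,\alpha]$ and $[1,1+\alpha]$ (in some order), so $\|f_S - f_{S'}\|_\infty \geq 1 - \alpha > 1/2 = 2C\sigma$. By the triangle inequality, no polynomial can be a $C$-approximation to more than one element of $\mathcal{F}$.

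Next I produce a sample distribution that is simultaneously consistent with every $f_S \in \mathcal{F}$ under the outlier model. Draw the $x_i$ i.i.d.\ uniformly from $\{b_1, \dotsc, b_m\}$ and have the adversary output $y_i = 0$ regardless of $S$. Lemma~\ref{lem:indicatorpolynomial2} bounds $|f_S(x_i) - 0| \leq \alpha = \sigma$ whenever $k_{x_i} \notin S$, so the sample only needs to be labeled as an outlier when $k_{x_i} \in S$; these outlier indicators are deterministic functions of the independent $x_i$, so they are jointly independent, and each has marginal probability $|S|/m = 1/2 = \rho$. This matches the outlier model exactly. Crucially, the joint distribution of the samples $\{(x_i, 0)\}$ does not depend on $S$, so the (possibly random) output list $L$ returned by the algorithm is independent of $S$.

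Finally, let $S$ be uniform on $\mathcal{F}$. By hypothesis, $\Pr[\exists\, q \in L \text{ with } \|q - f_S\|_\infty \leq C\sigma] \geq 3/4$. Conditioning on $L$, the separation step shows this event can hold for at most $|L|$ choices of $S \in \mathcal{F}$, so the conditional probability is bounded by $|L|/|\mathcal{F}|$. Taking expectations yields $\E[|L|] \geq \tfrac{3}{4}|\mathcal{F}| = \tfrac{3}{4}\cdot 2^{\Omega(d/\sqrt{C})}$, as desired. The one place I expect to tread carefully is verifying that the deterministic indicator $\mathbf{1}[k_{x_i} \in S]$ is admissible under the paper's ``independent outliers with probability at most $\rho$'' formalism; this works because the $x_i$ are i.i.d.\ and each outlier event has marginal probability exactly $\rho$, so the $O_i$ are jointly independent Bernoulli($\rho$) variables, which is precisely the required model.
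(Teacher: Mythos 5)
Your overall architecture (a $2^{\Omega(d/\sqrt{C})}$-sized family built from Lemma~\ref{lem:indicatorpolynomial2}, an $S$-independent observation distribution, and the counting step $\Pr[\text{success}] \leq \E[|L|]/|\mathcal{F}|$) matches the paper, and your separation and counting steps are fine. But the adversary you construct is not valid in the model, and this is exactly the point you flagged as needing care. In the paper's model the outlier indicators are coin flips that are independent of the sample locations $x_i$ (this is how the model is used throughout, e.g.\ in the proof of Corollary~\ref{cor:randomerrors}, where conditioned on $X_j$ points landing in an interval the number of outliers is $\mathrm{Binomial}(X_j,\rho)$). Your indicator $O_i = \mathbf{1}[k_{x_i}\in S]$ has the right marginal and is independent across $i$, but it is a deterministic function of $x_i$ — maximally correlated with it. Under the true model, with probability $1/4$ a given sample is an \emph{inlier} that lands at some $b_j$ with $j\in S$; for that sample the adversary is forced to report a value within $\sigma=\alpha$ of $f_S(b_j)\geq 1$, so it cannot report $0$. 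The all-zeros transcript is therefore not realizable, and any realizable transcript leaks $S$: observing $y_i\approx 1$ at $b_j$ certifies $j\in S$, while repeated $y_i=0$ at $b_j$ is strong evidence that $j\notin S$, so with enough samples the algorithm can simply read off $S$. The indistinguishability claim — the heart of the argument — collapses.

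The repair is the paper's two-valued adversary: for each sample, output either $f_{\emptyset}(x_i)$ or $f_{[m]}(x_i)$. By Lemma~\ref{lem:indicatorpolynomial2}, $f_S(x_i)$ is always within $\alpha$ of exactly one of these two functions (the one agreeing with $f_S$ on whether $k_{x_i}\in S$); the adversary outputs that one when the sample is an inlier and the \emph{other} one when it is an outlier. Since the outlier coin is fair and independent of everything, the observation at every $x$ is a fair coin flip between two $S$-independent values, which gives genuine independence from $S$ without violating the model — and it works for an arbitrary distribution on the $x_i$, whereas your construction additionally requires the artificial atomic sampling measure supported on $\{b_1,\dotsc,b_m\}$, which further weakens the statement relative to what the paper proves.
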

\begin{proof}
%
  Note that we may assume $d/C$ is larger than a sufficiently large
  constant, since otherwise the result follows from the fact that
  $\abs{L} \geq 1$ whenever the algorithm succeeds.  Let us then set
  $m = d/\sqrt{12C}$, and take $f_S$ from
  Lemma~\ref{lem:indicatorpolynomial2} for $\alpha = 1/(3C)$.  For any
  $S \subseteq [m]$ and $x \in [-1, 1]$, the lemma implies either
  $f_{\{\}}(x) \leq f_S(x) \leq f_{\{\}}(x) + \alpha$ or
  $f_{[m]}(x) - \alpha \leq f_{S}(x) \leq f_{[m]}(x)$.

  Therefore, when observing any polynomial $f_S$ for
  $S \subseteq [m]$, the adversary for $\sigma = \alpha$ can ensure
  that any sample point $x$ is observed as $f_{\{\}}(x)$ with $50\%$
  probability, and $f_{[m]}(x)$ with $50\%$ probability.  This is
  because $p(x)$ is always within $\alpha$ of one of these, so the
  adversary can output that one if $x$ is an inlier, and the other one
  if $x$ is an outlier.  For this adversary, the algorithm's input is
  independent of the choice of $f_S$.  Hence its distribution on
  output $L$ is also independent of $f_S$.  But each $\wh{p} \in L$
  can only be a $C$-approximation to at most one $f_S$.  Hence, if
  $S \subseteq [m]$ is chosen at random,
  \[
    \Pr[\exists \wh{p} \in L : \norm{\wh{p} - f_S} \leq C \sigma] \leq \frac{\E \abs{L}}{2^m}.
  \]
  This implies the result.
\end{proof}

\section*{Acknowledgements}
We would like to thank user111 on MathOverflow~\cite{u111} for
pointing us to~\cite{N79}.

\bibliographystyle{alpha}
\bibliography{main}

\appendix\label{proof:polyconstant}
\section{Proof of Lemma~\ref{lem:polyconstant}}
To prove Lemma~\ref{lem:polyconstant}, we need a a generalization of
Bernstein's inequality from $\ell_\infty$ to $\ell_q$ for all $q > 0$,
which appears as Theorem 5 of~\cite{N79}.  In the univariate case and
setting its parameters $\gamma, \Gamma = 0$, it states
\begin{lemma}[Special case of Theorem 5 of~\cite{N79}]\label{lem:nevai}
  There exists a universal constant $C$ such that, for any degree $d$
  polynomial $p(x)$ and any $q > 0$,
  \[
  \int_{-1}^1 \abs{\sqrt{1-x^2}p'(x)}^qdx \leq C d^q \int_{-1}^1 \abs{p(x)}^qdx.
  \]
\end{lemma}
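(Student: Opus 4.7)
The stated inequality is a classical weighted Bernstein-type inequality. My plan is to reduce it to a trigonometric polynomial inequality via the substitution $x = \cos\theta$, and then to apply weighted multiplier / conjugate-function theory on the torus. Writing $T(\theta) := p(\cos\theta)$, the chain rule gives $T'(\theta) = -\sin\theta\cdot p'(\cos\theta)$, so $|T'(\theta)| = \sqrt{1-x^2}\,|p'(x)|$ at $x = \cos\theta$; combined with $dx = -\sin\theta\, d\theta$, the inequality is exactly equivalent to
\[
\int_0^\pi |T'(\theta)|^q \sin\theta\, d\theta \;\leq\; C d^q \int_0^\pi |T(\theta)|^q \sin\theta\, d\theta
\]
for every even trigonometric polynomial $T$ of degree $d$ in $\theta$.

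The next step is to prove this weighted trigonometric Bernstein inequality. Expanding $T(\theta) = \sum_{k=0}^d a_k \cos(k\theta)$ one has $T'(\theta) = -\sum_{k=1}^d k\, a_k \sin(k\theta)$, so up to the multiplier $k$ (itself bounded by $d$) the map $T\mapsto T'$ is essentially $d$ times the conjugate-function operator on the torus (the Hilbert transform $\cos(k\theta)\mapsto \sin(k\theta)$, with the band-limited restriction $k\leq d$). By the Muckenhoupt--Hunt--Wheeden theorem, the Hilbert transform is bounded on $L^q(w\, d\theta)$ whenever $w\in A_q$; the weight $w(\theta) = \sin\theta$ on $[0,\pi]$ (extended periodically to $|\sin\theta|$) lies in $A_q$ for every $1<q<\infty$, giving the desired bound with a constant $Cd$ in that range.

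The main obstacle is the endpoint behavior. At $q=1$ the conjugate function is only of weak type $(1,1)$, and at $q=\infty$ it is unbounded; moreover $w=\sin\theta$ vanishes at both endpoints, which puts it on the boundary of $A_q$ for extremal $q$. To push through $q=1$ I would use that $T'$ has only $d$ frequencies, so weak-type $(1,1)$ combined with a Kolmogorov-type rearrangement argument recovers a strong $L^1(w)$ bound at the cost of a universal constant; equivalently, one can argue via duality against $L^\infty(w^{-1})$ after truncating to the band $|k|\le d$. The case $q<1$ (included in Nevai's statement) requires a different approach since linear operator bounds fail there; I would handle it by reducing to the $q=1$ case via the standard trick of applying the $q=1$ bound to $|T|^{q}$ viewed as a trigonometric polynomial of degree $\lceil qd\rceil$, after first splitting $T=T_+-T_-$ where both pieces are essentially nonnegative (so raising to the $q$-th power preserves polynomial structure up to a factor absorbed into $C$). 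This replicates the route Nevai takes using Christoffel-function estimates for the Chebyshev weight $1/\sqrt{1-x^2}$, which package the endpoint analysis in a form uniform over all $q>0$.
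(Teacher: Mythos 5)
This lemma is not proved in the paper at all: it is imported verbatim as a special case of Theorem 5 of Nevai~\cite{N79} (univariate case, $\gamma=\Gamma=0$), so there is no in-paper argument to compare against. Your opening reduction is correct and is the standard first move: with $x=\cos\theta$ and $T(\theta)=p(\cos\theta)$ an even trigonometric polynomial of degree $d$, the claim is equivalent to $\int_0^\pi |T'|^q\sin\theta\,d\theta \le Cd^q\int_0^\pi|T|^q\sin\theta\,d\theta$. The problems start after that.

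The central error is the claim that $w(\theta)=|\sin\theta|$ is an $A_q$ weight for every $1<q<\infty$. Near its zeros $w$ behaves like $|t|^{1}$, and the power weight $|t|^{a}$ belongs to $A_q$ if and only if $-1<a<q-1$; with $a=1$ this forces $q>2$. So the Muckenhoupt--Hunt--Wheeden route, as you state it, only covers $q>2$, and the range $1<q\le 2$ --- which you rely on as the base case for everything else --- is not established. Separately, even on the range where the weight is admissible, $T\mapsto T'$ on the band $|k|\le d$ is not ``$d$ times the conjugate function''; it is the multiplier $k\mapsto k$, and you would need a genuine transference or kernel argument (e.g.\ Riesz's interpolation formula, which writes $T'$ as a convex combination of translates of $T$ with total mass $d$) to get a constant $Cd$ rather than something worse. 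Finally, the endpoint and sub-one cases do not go through as described: $|T|^{q}$ is not a trigonometric polynomial for $0<q<1$, the positive and negative parts $T_\pm$ of a polynomial are not polynomials, and weak type $(1,1)$ does not upgrade to strong $L^1(w)$ boundedness for band-limited functions without a loss (typically $\log d$). Handling all $q>0$ uniformly is exactly the nontrivial content of Nevai's theorem (via Christoffel-function estimates, or Arestov-type arguments for quasi-norms), and your sketch does not yet contain a workable substitute. Given that the paper treats this as a citation, the honest options are to cite it as well, or to commit to a full proof along one of those established lines.
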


\restate{lem:polyconstant}
\begin{proof}
  For any individual $I_k$ and $x_k^* \in I_k$, we have by H\"older's inequality that
  \begin{align*}
    \int_{x \in I_k} \abs{p(x) - p(x^*_k)}^qdx &= \int_{x \in I_k} \tabs{\int_{x^*}^x p'(y)dy}^qdx
    \\&\leq \int_{x \in I_k} \abs{x - x^*}^{q-1}\int_{x^*}^x \abs{p'(y)}^q dydx
    \\&\leq \abs{I_k}^q \int_{x \in I_k} \abs{p'(x)}^qdx.
  \end{align*}
  Hence
  \begin{align}\label{eq:Ione}
    \int_{x \in I_k} \abs{p(x) - r(x)}^qdx  \leq \abs{I_k}^q \int_{x \in I_k} \abs{p'(x)}^qdx.
  \end{align}
  We first consider $I_2, \dotsc I_{m/2}$, then separately consider
  the first interval $I_1$. The statement holds for the intervals $I_{m/2}, \dots, I_m$ by symmetry.  For each interval $I_k$
  with $2 \leq k \leq \frac{m}{2}$, we have for all $ x \in I_k$
  \[
    \abs{I_k} = \cos\left(\frac{k\pi}{m}\right) - \cos\left(\frac{(k+1)\pi}{m}\right) \lesssim \frac{k}{m^2} = \frac{k/m}{m} \lesssim \frac{\sin(k\pi/m)}{m} \lesssim \frac{\sqrt{1-x^2}}{m}
  \]
  where we use the notation $a \lesssim b$ to denote that there exists
  a universal constant $C$ such that $a \leq Cb$.  Hence
  \begin{align*}
    \int_{x \in I_k} \abs{p(x) - r(x)}^qdx &\leq \abs{I_k}^q \int_{x \in I_k} \abs{p'(x)}^qdx
    \lesssim \frac{1}{(cm)^q}\int_{x \in I_k} \abs{\sqrt{1 - x^2}p'(x)}^qdx
  \end{align*}
  and so by Lemma~\ref{lem:nevai}, for some constant $c$,
  \begin{align}
    \int_{x \in I_2 \cup \dotsb \cup I_{m/2}} \abs{p(x) - r(x)}^qdx
    \leq\frac{1}{(cm)^q} \int_{-1}^1 \abs{\sqrt{1 - x^2}p'(x)}^qdx
    \lesssim \left(\frac{d}{m} \norm{p}_q\right)^q.\label{eq:Imiddle}
  \end{align}
  Now we consider the end $I_1$.  By the Markov brothers'
  inequality~\cite{achieser2013theory},
  \[
  \norm{p'}_\infty \leq d^2 \norm{p}_\infty.
  \]
  Let $x^*\in [-1, 1]$ such that $\abs{p(x^*)} = \norm{p}_\infty$, and let
  $I' = \{y \in [-1, 1] \mid \abs{x^*-y} \leq \frac{1}{2d^2}\}$.  We
  have $p(y) \geq \norm{p}_\infty/2$ for all $y \in I'$, so
  \[
  \norm{p}_q^q \geq \abs{I'}(\norm{p}_\infty/2)^q \geq \frac{\norm{p}_\infty^q}{2d^22^q}.
  \]
  Hence by~\eqref{eq:Ione}, and using that $\abs{I_1} = 1 - \cos\frac{\pi}{m} = \Theta(\frac{1}{m^2})$,
  \[
  \int_{x \in I_1} \abs{p(x) - r(x)}^qdx \leq \abs{I_1}^{q+1} \norm{p'}_\infty^q \leq \frac{d^{2q}}{(cm)^{2q+2}}\norm{p}_\infty^q \lesssim \left(\frac{d\sqrt{2}}{cm}\right)^{2q+2} \norm{p}_q^q
  \]
  For some constant $c$. The same holds for intervals $I_{m/2}, \dots, I_m$ by symmetry, so combining with~\eqref{eq:Imiddle} gives
  \[
  \int_{-1}^1 \abs{p(x) - r(x)}^qdx \lesssim \left(\frac{d}{cm}\right)^q\left(1 + 2^{q+1}\left(\frac{d}{m}\right)^2\right) \norm{p}_q^q
  \]
  or
  \[
  \norm{p-r}_q \lesssim \frac{d}{m}\left(1 + \left(\frac{d}{m}\right)^{2/q}\right)\norm{p}_q.
  \]
  When $m \geq d$, the first term dominates giving the result.
\end{proof}

\section{Proof of Lemma~\ref{lem:indicatorpolynomial}}\label{app:indicator}
\restate{lem:indicatorpolynomial}
\begin{proof}
  We will show a stronger form of this lemma for even $d$ and $b = 0$, giving
  \begin{align}
    \abs{p_0(x)} \leq \frac{1}{(d+1)\abs{x}}.\label{eq:9}
  \end{align}
  By subtracting $1$ from odd $d$, this implies the same for general
  $d$ with a $1/d$ rather than $1/(d+1)$ term.  Then for general $b$
  we take $p_b(x) = p_0((x - b)/2)$, which satisfies the lemma.  So it
  suffices to show~\eqref{eq:9} for even $d$ and $b=0$.

  We choose $p_0(x)$ to be
  \[
    p(x) := (-1)^{d/2}\frac{T_{d+1}(x)}{(d+1)x}
  \]
  so that
  \[
    p(\cos \theta) = (-1)^{d/2} \frac{\cos((d+1)\theta)}{(d+1)\cos \theta}
  \]
  or, replacing $\theta$ with $\frac{\pi}{2} - \theta$ and using that $\sin(d\frac{\pi}{2} + \psi) = (-1)^{d/2} \sin \psi$,
  \[
    p(\sin \theta) = \frac{\sin((d+1)\theta)}{(d+1)\sin \theta}.
  \]
  Since $\abs{\sin((d+1) \theta)} \leq 1$, this immediately
  gives~\eqref{eq:9}; we just need to show
  $p(0) = \norm{p}_\infty = 1$.  By L'H\^opital's rule, we have
  \[
    p(0) = \frac{(d+1) \cos((d+1) \cdot 0)}{(d+1) \cos 0} = 1.
  \]
  If $\theta \geq 1.1/(d+1)$, then $\abs{\sin \theta} \geq 1/(d+1)$,
  and so~\eqref{eq:9} implies $\abs{p(\sin \theta)} \leq 1$.  Since
  $p$ is symmetric, all that remains is to show
  $\abs{p(\sin \theta)} \leq 1$ for $0 < \theta < 1.1/(d+1)$.  

  The maximum value of $\abs{p(\sin \theta)}$ will either appear at an
  endpoint of this interval---which we have already shown is at most
  $1$---or at a zero of the derivative.  We have
  \begin{align*}
    \frac{\partial}{\partial \theta} p(\sin \theta)
    &= \frac{(d+1)\cos((d+1) \theta)}{(d+1) \sin \theta} - \frac{\sin((d+1) \theta)}{((d+1) \sin \theta)^2} \cdot (d+1)\cos \theta\\
    &= \frac{(d+1)\cos((d+1)\theta)\sin \theta - \sin((d+1)\theta)\cos \theta}{(d+1)\sin^2 \theta}.
  \end{align*}
  For all $0 < \psi$, we have the inequalities
  $\psi - \psi^3/6 < \sin \psi < \psi$ and
  $1 - \psi^2/2 < \cos \psi < 1 - \psi^2/2 + \psi^4/24$.  Hence for
  $0 < \theta < 1.1/(d+1)$, the denominator is positive and the
  numerator is less than
  \begin{align*}
    &(d+1)(1 - (d+1)^2\theta^2/2 + (d+1)^4 \theta^4/24) \theta - \left((d+1)\theta - (d+1)^3\theta^3/6\right) (1 - \theta^2/2) \\
    &= \theta^3\left(-(d+1)^3/2 + (d+1)/2 + (d+1)^3/6\right) + \theta^5\left((d+1)^5/24 - (d+1)^3/12\right)\\
    &\leq -\frac{5}{18}(\theta(d+1))^3 + (\theta(d+1))^5/24\\
    &\leq -0.22 (\theta(d+1))^3 < 0.
  \end{align*}
  Thus $p(\sin \theta)$ does not have a local maximum on
  $(0, 1.1/(d+1))$, so $\norm{p}_\infty \leq 1$, finishing the proof.
\end{proof}
\restate{lem:indicatorpolynomial2}
\begin{proof}
  Since
  $f_{S \cup \{k_x\}}(x) = f_{\{k_x\}}(x) + f_{S \setminus
    \{k_x\}}(x)$ for any $S$, it is sufficient to prove the result
  when $k_x \notin S$.  Then
  \begin{align*}
    f_S(x) \leq \sum_{j \neq k_x} p_{b_j}^2(x) &\leq \sum_{j \neq k_x} \frac{4}{d^2 \abs{x - b_j}^2} \leq \sum_{j \neq k_x} \frac{4}{d^2 ((2\abs{j-k_x}-1)\frac{2}{m})^2}\notag\\
                                 &< \frac{2m^2}{d^2} \sum_{j=1}^\infty\frac{1}{(2j-1)^2}  < \frac{2m^2}{d^2} \frac{\pi^2}{6}\notag\\
                                 &< \alpha
  \end{align*}
  as desired.
\end{proof}

\end{document}